\documentclass[conference]{IEEEtran}

\usepackage{graphicx} 
\usepackage{balance}
\usepackage{amsmath}
\usepackage{amssymb}
\usepackage{amsfonts}
\usepackage{amsthm}
\usepackage{color}
\usepackage{wrapfig}
\usepackage{epsfig}
\usepackage{xspace}
\usepackage{relsize}
\usepackage{colortbl}
\usepackage[update,prepend]{epstopdf}
\usepackage{subfigure,color}
\usepackage[font=small,justification=centering,singlelinecheck=off]{caption}

\newtheorem{example}{\textbf{Example}}

\newtheorem{theorem}{\textbf{Theorem}}
\newtheorem{lemma}{\textbf{Lemma}}

\newtheorem{definition}{Definition}

\makeatletter
\newif\if@restonecol
\makeatother

\usepackage[noend]{algpseudocode}
\usepackage[linesnumbered,ruled,vlined, noend]{algorithm2e}
\usepackage[noadjust]{cite}
\usepackage[table]{xcolor}
\usepackage{multirow}

\captionsetup[table]{justification=justified,singlelinecheck=on}
\usepackage[bookmarks=false,hidelinks]{hyperref}

\newcommand{\kwnospace}[1]{{\ensuremath {\mathsf{#1}}}}
\newlength{\figsize} \setlength{\figsize}{0.22\textwidth}

\newcommand{\degree} {deg}

\newcommand{\abcore} {($\alpha$, $\beta$)\textnormal{-}core\xspace}
\newcommand{\abcores} {($\alpha$, $\beta$)\textnormal{-}cores\xspace}
\newcommand{\abcoreg} {R_{\alpha,\beta}\xspace}

\newcommand{\abcorec} {($\alpha$, $\beta$)\textnormal{-}connected component\xspace}
\newcommand{\abcorecs} {($\alpha$, $\beta$)\textnormal{-}connected components\xspace}
\newcommand{\abcorecom} {($\alpha$, $\beta$)\textnormal{-}community\xspace}
\newcommand{\abcorecoms} {($\alpha$, $\beta$)\textnormal{-}communities\xspace}
\newcommand{\abcorecg} {C_{\alpha,\beta}\xspace}
\newcommand{\abcorecgq} {C_{\alpha,\beta}(q)\xspace}
\newcommand{\wcomg} {\mathcal{R}\xspace}

\newcommand{\wcom} {significant ($\alpha$, $\beta$)-community\xspace}
\newcommand{\wcoms} {significant ($\alpha$, $\beta$)-communities\xspace}

\newcommand{\abindex} {bicore index\xspace}
\newcommand{\indexbsa} {I_{bs}^{\alpha}\xspace}
\newcommand{\indexbsb} {I_{bs}^{\beta}\xspace}
\newcommand{\indexbsas} {\kwnospace{size}(I_{bs}^{\alpha})\xspace}

\newcommand{\indexbsat} {\kwnospace{TC}(I_{bs}^{\alpha})\xspace}

\newcommand{\indexad} {I_{\delta}\xspace}

\newcommand{\indexads} {\kwnospace{size}(I_{\delta})\xspace}
\newcommand{\indexada} {I_{\delta}^{\alpha}\xspace}
\newcommand{\indexadb} {I_{\delta}^{\beta}\xspace}

\newcommand{\indexbc} {I_{v}\xspace}

\newcommand{\size}{\kwnospace{size}\xspace}
\newcommand{\sort}{\kwnospace{sort}\xspace}

\newcommand{\peel} {\kwnospace{SCS}\textrm{-}\kwnospace{Peel}\xspace}
\newcommand{\expand} {\kwnospace{SCS}\textrm{-}\kwnospace{Expand}\xspace}
\newcommand{\binary} {\kwnospace{SCS}\textrm{-}\kwnospace{Binary}\xspace}
\newcommand{\baseline} {\kwnospace{SCS}\textrm{-}\kwnospace{Baseline}\xspace}

\sloppy
\textfloatsep 1mm plus 0mm \intextsep 1mm plus 0mm

\usepackage{enumitem}
\setlist{nolistsep}
\usepackage{amsmath}
\makeatletter
\g@addto@macro\normalsize{%
\setlength\abovedisplayskip{-1pt}
\setlength\belowdisplayskip{0pt}
\setlength\abovedisplayshortskip{-1pt}
\setlength\belowdisplayshortskip{0pt}
}
\makeatother

\begin{document}

\title{Efficient and Effective Community Search on Large-scale Bipartite Graphs}

 \author{
 	\IEEEauthorblockN{Kai Wang$^{\dagger}$, Wenjie Zhang$^\dagger$, Xuemin Lin$^{\dagger}$, Ying Zhang$^\star$, Lu Qin$^\star$, Yuting Zhang$^{\dagger}$}
 	\IEEEauthorblockA{
 		$^\dagger$University of New South Wales,
         $^\star$University of Technology Sydney\\
 		kai.wang@unsw.edu.au, \{zhangw, lxue\}@cse.unsw.edu.au, \{ying.zhang, lu.qin\}@uts.edu.au,
 		ytzunsw@gmail.com
 	}
 }
\maketitle

\begin{abstract}

Bipartite graphs are widely used to model relationships between two types of entities. Community search retrieves densely connected subgraphs containing a query vertex, which has been extensively studied on unipartite graphs. However, community search on bipartite graphs remains largely unexplored. Moreover, all existing cohesive subgraph models on bipartite graphs can only be applied to measure the structure cohesiveness between two sets of vertices while overlooking the edge weight in forming the community. 
In this paper, we study the \wcom search problem on weighted bipartite graphs. Given a query vertex $q$, we aim to find the \wcom $\wcomg$ of $q$ which adopts \abcore to characterize the engagement level of vertices, and maximizes the minimum edge weight (significance) within $\wcomg$.

To support fast retrieval of $\wcomg$, we first retrieve the maximal connected subgraph of \abcore containing the query vertex (the \abcorecom), and the search space is limited to this subgraph with a much smaller size than the original graph. A novel index structure is presented which can be built in $O(\delta \cdot m)$ time and takes $O(\delta \cdot m)$ space where $m$ is the number of edges in $G$, $\delta$ is bounded by $\sqrt m$ and is much smaller in practice. Utilizing the index, the \abcorecom can be retrieved in optimal time. To further obtain $\wcomg$, we develop peeling and expansion algorithms to conduct searches by shrinking from the \abcorecom and expanding from the query vertex, respectively. The experimental results on real graphs not only demonstrate the effectiveness of the \wcom model but also validate the efficiency of our query processing and indexing techniques.
\end{abstract}

\section{Introduction}
\label{sct:introduction}

In many real-world applications, relationships between two different types of entities are modeled as bipartite graphs, such as customer-product networks \cite{wang2006unifying}, user-page networks \cite{beutel2013copycatch} and collaboration networks \cite{konect:DBLP}. Community structures naturally exist in these practical networks and {\em community search} has been extensively explored and proved useful on unipartite graphs \cite{CXWW14SIGMOD, SG10KDD, barbieri2015efficient, fang2016effective, huang2014querying, huang2015approximate, huang2017attribute, CSSurvey2020}. Given a query vertex $q$, {\em community search} aims to find communities (connected subgraphs) containing $q$ which satisfy specific cohesive constraints. {\color{black} In the literature, fair clustering methods \cite{chierichetti2017fair,ahmadi2020fair,ahmadian2020fair} are used to find communities (i.e., clusters) under fairness constraints on bipartite graphs. However, they aim to find a set of clusters under a global optimization goal and do not aim to search a personalized community for a specific user.} Nevertheless, no existing work has studied the {\em community search} problem on bipartite graphs. On bipartite graphs, various dense subgraph models are designed (e.g., ($\alpha, \beta$)-core \cite{liu2019,ding2017efficient}, bitruss \cite{wang2020efficient,zou2016bitruss, sariyuce2018peeling} and biclique \cite{zhang2014finding}) which can be used as the cohesive measurement of a community.
However, simply applying these cohesive measurements only ensures the structure cohesiveness of communities but ignores another important characteristic, the weight (or significance) of interactions between the two sets of vertices. For example, ($\alpha, \beta$)-core is defined as the maximal subgraph where each vertex in upper layer has at least $\alpha$ neighbors and each vertex in lower layer has at least $\beta$ neighbors. In the customer-movie network shown in Figure \ref{fig:motivation}, each edge has a weight denoting the rating of a user to a movie. If the ($\alpha, \beta$)-core model is applied to search a community of ``Eric'', e.g., the maximal connected subgraph of (3, 2)-core containing ``Eric'', we will get the community formed by the four users and the five movies on the left side. Note that, this community includes ``Alien'' (not liked by ``Andy'' or ``Kane'') and ``Taylor'' (who has less interest in this genre of movies). 


\begin{figure}[t]
\centering
\includegraphics[width=0.34\textwidth]{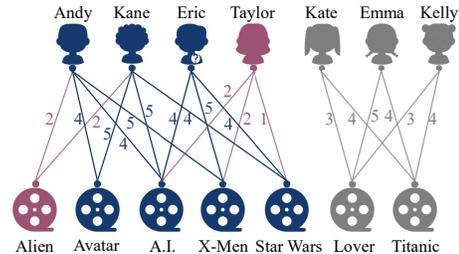}
\vspace{-1mm}
\caption{A user-movie network}
\label{fig:motivation}
\end{figure}

In this paper, we study the significant community search problem on weighted bipartite graphs, which is the first to study community search on bipartite graphs. Here, in a weighted bipartite graph $G$, each edge is associated with an edge weight. In addition, the weight (significance) of a community is measured by the minimum edge weight in it. {\color{black} A community with a high weight value indicates that every edge in the community represents a highly significant interaction.} We propose the \wcom model, which is the maximal connected subgraph containing the query vertex $q$ that satisfies the vertex degree constraint from \abcore, and has the highest graph significance. {\color{black} The intuition behind the new \wcom model is to capture structure cohesiveness as well as interactions (edges) with high significance. In addition, if we maximize the weight value under given $\alpha$ and $\beta$, we can find the most significant subgraph while preserving the structure cohesiveness.} For example, in Figure \ref{fig:motivation}, the subgraph in blue color, which excludes ``Alien'' and ``Taylor'', is the significant (3, 2)-community of ``Eric''. 

\noindent
{\bf Applications.} Finding the \wcom has many real-world applications and we list some of them below.

\noindent {\em $\bullet$ Personalized Recommendation.} In user-item networks, users leave reviews for items with ratings. Examples include viewer-movie network in \texttt{IMDB} ({https://www.imdb.com}), reader-book network in \texttt{goodreads} ({https://www.goodreads.com}), etc.  The platforms can utilize the \wcom model to provide personalized recommendations. For example, based on the community found in Figure \ref{fig:motivation}, we can put the people who give common high ratings (``Andy'' and ``Kane'') on the recommended friend list of the query user (``Eric''). We can also recommend the movie (``Avatar'') which the user is likely to be interested in to the query user (``Eric''). 

\noindent {\em $\bullet$ Fraud Detection.} In e-commerce platforms such as Amazon and Alibaba, customers and items form a customer-item bipartite graph in which an edge represents a customer purchased an item, and the edge weight measures the number of purchases or the total transaction amount. Fraudsters and the items they promote are prone to form cohesive subgraphs \cite{liu2019, wang2020efficient}. Since the cost of opening fake accounts is increased with the improvement of fraud detection techniques, frauds cannot rely on many fake accounts \cite{beutel2013copycatch}. Thus, the number of purchases or the total transaction amount per account is increased. Given a suspicious item or customer as the query vertex, our \wcom model allows us to find the most suspicious fraudsters and related items in the customer-item bipartite graphs and reduce false positives. 

\noindent {\em $\bullet$ Team Formation.} In a bipartite graph formed by developers and projects, an edge between a developer and a project indicates that the developer participates in the project, and the edge weight shows the corresponding contribution (e.g., number of tasks accomplished). A developer may wish to assemble a team with a proven track record of contributions in related projects, which can be supported by a \wcom search over the bipartite graph.

\noindent
{\bf Challenges.} {\color{black} To obtain the \wcom, we can iteratively remove the vertices without enough neighbors and the edges with small weights from the original graph. However, when the graph size is large and there are many vertices and edges that need to be removed, this approach is inefficient. For example, Figure \ref{fig:index_idea}(a) shows the graph $G$ with 2{,}003 edges. We need to remove 1{,}999 edges from $G$ to get the significant $(2, 2)$-community of $u_3$ with only 4 edges. 

In this paper, we focus on indexing-based approaches. A straightforward idea is precomputing all the \wcoms for all $\alpha$, $\beta$, and $q$ combinations. This idea is impractical since both structure cohesiveness and significance need to be considered. For different $q$ and $\alpha$, $\beta$ values, the \wcoms can be different and there does not exist hierarchical relationships among them. Therefore, we resort to a two-step approach. In the first step, we observe that the \abcorecom always contains the \wcom for a query vertex $q$. Here, \abcorecom is the maximal connected subgraph containing $q$ in the \abcore (without considering the edge weights). For example, Figure \ref{fig:index_idea}(b) shows the $(2, 2)$-community of $u_3$ which contains the significant $(2, 2)$-community of $u_3$ and is much smaller than the original graph $G$. Therefore, we try to index all \abcorecoms and use the one containing $q$ as the starting point when querying w.r.t. $q$. In the second step, we compute the \wcom based on the \abcorecom obtained in the first step. To make our ideas practically applicable, we need to address the following challenges.
\begin{enumerate}
    \item How to build an index to cover all \abcorecoms.
    \item How to bound the index size and the indexing time.
    \item How to efficiently obtain the \wcom from the \abcorecom of a query vertex.
\end{enumerate}

\begin{figure}[t]
\begin{centering}
\includegraphics[trim=0 10 0 15,width=0.46\textwidth]{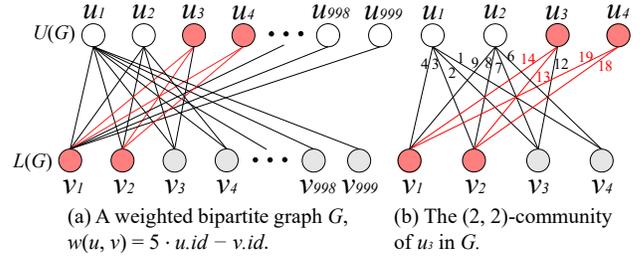}
\caption{An example graph, the significant $(2, 2)$-community of $u_3$ is marked in red color}
\label{fig:index_idea}
\vspace*{-1mm}
\end{centering}
\end{figure}

\noindent
{\bf Our approaches.} To address Challenge 1, we first propose the index $\indexbsa$ to store all the \abcorecoms. It is observed that the model of \abcore has a hierarchical property. In other words, $(\alpha,\beta)$-core $ \subseteq  (\alpha',\beta')$-core if $\alpha \geq \alpha'$ and $\beta \geq \beta'$. For example, in Figure \ref{fig:index_idea}, $G$ itself is the $(1, 1)$-core, the induced subgraph of $\{u_1, ..., u_{999}, v_1, v_2, v_3, v_4\}$ is the $(1, 2)$-core and we can obtain the $(1, 3)$-core from the $(1, 2)$-core by excluding $v_4$. Motivated by this observation, all the $(1,\beta)$-community with $\beta \geq 1$ can be organized hierarchically in the $(1, 1)$-core. For each vertex existing in $(1, 1)$-core, we sort its neighbors according to the maximal $\beta$ value where they exist in the $(1,\beta)$-core in non-increasing order. Then, when querying a $(1,\beta)$-community with $\beta \geq 1$, we only need to take the vertices and edges in this community using breath-first search. For example, if we want to query the $(1, 2)$-community of $u_1$, we first take the neighbors $\{v_1, v_2, v_3, v_4\}$ of $u_1$ and get $u_1$ to $u_{999}$ after searching from $v_1$. By organizing all the $(\alpha,1)$-cores where $\alpha \in [1, \alpha_{max}]$ in this manner, $\indexbsa$ can cover all the \abcorecoms. Similarly, we can also build the index $\indexbsb$ which stores all the $(1, \beta)$-core where $\beta \in [1, \beta_{max}]$ to cover all the \abcorecoms. Here $\alpha_{max}$ and $\beta_{max}$ are the maximal valid $\alpha$ and $\beta$ values in $G$ respectively. 


Reviewing $\indexbsa$ and $\indexbsb$, we observe that $\indexbsa$($\indexbsb$) can be very large when high degree vertices exist in $U(G)$($L(G)$). For example, $\indexbsa$ needs to store 999 copies of neighbors of $u_1$ since $u_1$ is contained in $(999, 1)$-core. The same issue occurs when $\indexbsb$ stores $v_1$'s neighbors. To handle this issue and address Challenge 2, we further propose the degeneracy-bounded index $\indexad$. Here, the degeneracy ($\delta$) is the largest number where the $(\delta, \delta)$-core is nonempty in $G$. Note that for each nonempty \abcore (or ($\alpha$, $\beta$)-community), we must have $min(\alpha, \beta) \leq \delta$. This is because it contradicts the definition of $\delta$ if an \abcore with $\alpha > \delta$ and $\beta > \delta$ exists. In addition, according to the hierarchical property of the \abcore model, all \abcorecoms with $\alpha \leq \beta$ can be organized in the $(\alpha, \alpha)$-core and all \abcorecoms with $\beta < \alpha$ can be organized in the $(\beta, \beta)$-core. In this manner, $\indexad$ only needs to store all the $(\tau, \tau)$-cores for each $\tau \in [1, \delta]$ to cover all the \abcorecoms. For example, in Figure \ref{fig:index_idea}, unlike $\indexbsa$ which needs to store $(1, 1)$-core to $(999, 1)$-core, $\indexad$ only needs to store $(1, 1)$-core, $(2, 2)$-core and $(3, 3)$-core since $\delta=3$. Since the size of each $(\tau, \tau)$-core ($\tau \in [1, \delta]$) is bounded by $O(m)$, $\indexad$ can be built in $O(\delta \cdot m)$ time and takes $O(\delta \cdot m)$ space to index all the \abcorecoms. 

To address Challenge 3, after retrieving the \abcorecom $\abcorecgq$, we first propose the peeling algorithm \peel which iteratively removes the edge with the minimal weight from $\abcorecgq$ to obtain $\wcomg$. For example, in Figure \ref{fig:index_idea}(b), to obtain the significant $(2, 2)$-community of $u_3$, the edge ($u_1$, $v_4$) is the first edge to be removed in \peel. Observing that $\wcomg$ can be much smaller than $\abcorecgq$ in many cases, we also propose the expansion algorithm \expand which iteratively adds the edge with maximal weights into an empty graph until $\wcomg$ is found. In \expand, we derive several rules to avoid excessively validating $\wcomg$. 
}

\noindent
{\bf Contribution.}
Our main contributions are listed as follows.

\begin{itemize}

\item We propose the model of \wcom which is the first to study community search problem on (weighted) bipartite graphs. 

\item  We develop a new two-step paradigm to search the \wcom. Under this two-step paradigm, novel indexing techniques are proposed to support the retrieval of the \abcorecom in optimal time. The index $\indexad$ can be built in $O(\delta \cdot m)$ time and takes $O(\delta \cdot m)$ space where $\delta$ is bounded by $\sqrt m$ and is much smaller in practice. Note that the proposed indexing techniques can also be directly applied to retrieve the \abcorecom on unweighted bipartite graphs in optimal time.

\item We propose efficient query algorithms to extract the \wcom from the \abcorecom.

\item We conduct comprehensive experiments on 11 real weighted bipartite graphs to evaluate the effectiveness of the proposed model and the efficiency of our algorithms.

\end{itemize}



\vspace{-0.2cm}
\section{Problem Definition}
\label{sct:preliminaries}






Our problem is defined over an undirected weighted bipartite graph $G(V\textnormal{=}(U, L), E)$, where $U(G)$ denotes the set of vertices in the upper layer, $L(G)$ denotes the set of vertices in the lower layer, $U(G) \cap L(G) = \emptyset$, $V(G) = U(G) \cup L(G)$ denotes the vertex set, $E(G) \subseteq U(G) \times L(G)$ denotes the edge set. An edge $e$ between two vertices $u$ and $v$ in $G$ is denoted as $(u, v)$ or $(v, u)$. The set of neighbors of a vertex $u$ in $G$ is denoted as $N(u, G) = \{ v\in V(G) \mid (u, v) \in E(G) \} $, and the degree of $u$ is denoted as $\degree(u, G) = |N(u, G)|$. We use $n$ and $m$ to denote the number of vertices and edges in $G$, respectively, and we assume each vertex has at least one incident edge. Each edge $e = (u, v)$ has a weight $w(e)$ (or $w(u, v)$). The size of $G$ is denoted as $\size(G) = |E(G)|$. 


\begin{definition}
\label{def:abcore}
{\bf (\abcore)} Given a bipartite graph G and degree constraints $\alpha$ and $\beta$, 
a subgraph $\abcoreg$ is the $(\alpha,\beta)$-core of $G$
if (1) $deg(u, \abcoreg) \geq \alpha$ for each $u \in U(\abcoreg)$ and 
$deg(v, \abcoreg) \geq \beta$ for each $v \in L(\abcoreg)$;
(2) $\abcoreg$ is maximal, i.e., any supergraph $G' \supset \abcoreg$ is not an \abcore. 
\end{definition}

\begin{definition}
\label{def:abcorec}
{\bf (($\alpha$, $\beta$)\textnormal{-}Connected Component)} Given a bipartite graph $G$ and its \abcore $\abcoreg$, 
a subgraph $\abcorecg$ is a \abcorec
if (1) $\abcorecg \subseteq \abcoreg$ and $\abcorecg$ is connected; (2) $\abcorecg$ is maximal, i.e., any supergraph $G' \supset \abcorecg$ is not a \abcorec. 
\end{definition}

\begin{definition}
\label{def:abcorecom}
{\bf (($\alpha$, $\beta$)\textnormal{-}Community)} Given a vertex $q$, we call the \abcorec containing $q$ the ($\alpha$, $\beta$)\textnormal{-}community, denoted as $\abcorecgq$.
\end{definition}
\begin{definition} {\bf (Bipartite Graph Weight)}
Given a bipartite graph $G$, the weight value of $G$ denoted by $f(G)$ is defined as the minimum edge weight in $G$. 
\end{definition}

After introducing the \abcore and bipartite graph weight, we define the \wcom as below.

\begin{definition} {\bf (Significant $(\alpha, \beta)$-Community)}
\label{def:wcom}
Given a weighted bipartite graph $G$, degree constraints $\alpha$, $\beta$ and query vertex $q$, a subgraph $\wcomg$ is the \wcom of $G$ if it satisfies the following constraints:
\begin{enumerate}
\item {\bf Connectivity Constraint.} $\wcomg$ is a connected subgraph which contains $q$; 
\item {\bf Cohesiveness Constraint.} Each vertex $u \in$ $U(\wcomg)$ satisfies $\degree(u, \wcomg) \geq \alpha$ and each vertex $v \in$ $L(\wcomg)$ satisfies $\degree(v, \wcomg) \geq \beta$;  
{\color{black}
\item {\bf Maximality Constraint.} There exists no other $G' \subseteq \abcorecgq$ satisfying constraints 1) and 2) with $f(G') > f(\wcomg)$. In addition, there exists no other supergraph $G'' \supset \wcomg$ satisfying constraints 1) and 2) with $f(G'') = f(\wcomg)$.
}
\end{enumerate}
\end{definition}

\noindent
\textbf{Problem Statement. }
Given a weighted bipartite graph $G$, parameters $\alpha$, $\beta$ and a query vertex $q$, the \textit{\wcom search} problem aims to find the \wcom (SC) in $G$.

{\color{black}
\begin{example}
Consider the bipartite graph $G$ in Figure \ref{fig:index_idea}(a). Figure \ref{fig:index_idea}(b) shows the (2,2)-community of $u_3$. In addition, the significant (2,2)-community of $u_3$ is shown in Figure \ref{fig:index_idea}(b) (in red color) which is formed by the edges $(u_3, v_1)$, $(u_3, v_2)$, $(u_4, v_1)$ and $(u_4, v_2)$.
\end{example}
}

\noindent
{\bf Solution Overview.} According to Definition \ref{def:abcorecom} and Definition \ref{def:wcom}, we have the following lemma.
{\color{black} 
\begin{lemma}
\label{lemma:wcom}
Given a weighted bipartite graph $G$, the significant $(\alpha, \beta)$-community is unique, which is a subgraph of the $(\alpha, \beta)$-community.
\end{lemma}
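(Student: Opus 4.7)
The plan is to split Lemma~\ref{lemma:wcom} into the two natural sub-claims and prove each in turn: (a) containment, $\wcomg \subseteq \abcorecgq$, and (b) uniqueness.

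For containment, I would argue directly from Definitions~\ref{def:abcore}--\ref{def:abcorecom} without invoking the maximality constraint~(3) of Definition~\ref{def:wcom}. Any subgraph $H$ that satisfies constraints~(1) and~(2) of Definition~\ref{def:wcom} already fulfils the degree thresholds $\alpha$ and $\beta$ on $U(H)$ and $L(H)$ respectively, so by the maximality clause in Definition~\ref{def:abcore} we have $H \subseteq \abcoreg$. Because $H$ is additionally connected and contains $q$, it must sit inside the connected component of $\abcoreg$ that contains $q$, which is by definition $\abcorecgq$. Applying this observation to $H = \wcomg$ yields the desired inclusion.

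For uniqueness, I would argue by a union/closure argument. Suppose two distinct significant $(\alpha,\beta)$-communities $\wcomg_1$ and $\wcomg_2$ both satisfied Definition~\ref{def:wcom}. Constraint~(3) forces $f(\wcomg_1) = f(\wcomg_2) = f^{*}$, the largest weight value attainable by any subgraph of $\abcorecgq$ obeying (1)--(2). Form their edge-wise union $S = (V(\wcomg_1) \cup V(\wcomg_2),\, E(\wcomg_1) \cup E(\wcomg_2))$. Because $q$ lies in both summands, $S$ is connected and contains $q$; each vertex of $S$ inherits its degree bound from whichever $\wcomg_i$ it already belonged to, so constraints~(1) and~(2) survive taking unions; and since $E(S) = E(\wcomg_1) \cup E(\wcomg_2)$, the minimum edge weight of $S$ equals $\min(f(\wcomg_1), f(\wcomg_2)) = f^{*}$. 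Containment (a) gives $S \subseteq \abcorecgq$. But now $S$ is a supergraph of $\wcomg_1$ satisfying (1)--(2) with $f(S) = f(\wcomg_1)$, which contradicts the second half of constraint~(3) unless $S = \wcomg_1$; symmetrically $S = \wcomg_2$, forcing $\wcomg_1 = \wcomg_2$.

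The main obstacle I anticipate is purely bookkeeping rather than conceptual: one must be explicit that the ``union of two subgraphs'' is the edge-union (together with all incident endpoints), so that the degree of every vertex in $S$ is at least its degree in the $\wcomg_i$ it came from, and so that no new lighter edge is introduced when comparing $f(S)$ with $f(\wcomg_i)$. Once that convention is pinned down, both parts fall out in a few lines, and no deeper property of the bipartite structure or of the $(\alpha,\beta)$-core hierarchy is needed beyond the maximality clauses already built into Definitions~\ref{def:abcore} and~\ref{def:wcom}.
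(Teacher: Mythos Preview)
Your proposal is correct and follows essentially the same strategy as the paper: take the union of two putative significant $(\alpha,\beta)$-communities and derive a contradiction with the maximality constraint, while noting that containment in $\abcorecgq$ holds ``by definition.'' Your write-up is actually more careful than the paper's terse proof, since you explicitly justify why $f(\wcomg_1)=f(\wcomg_2)$, why the union remains connected (via $q$) and cohesive, and why the containment $\wcomg\subseteq\abcorecgq$ follows from the maximality clause of Definition~\ref{def:abcore}.
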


\vspace{-0.3cm}
\begin{proof}
Suppose there exist two different \wcoms $\wcomg_1$ and $\wcomg_2$ where $f(\wcomg_1)$ = $f(\wcomg_2)$, $\wcomg_1 \not \subseteq \wcomg_2$ and $\wcomg_2 \not \subseteq \wcomg_1$. Then $\wcomg_3 = \wcomg_1 \cup \wcomg_2$ satisfies constraints 1) and 2) in Definition \ref{def:wcom} with $f(\wcomg_3)$ = $f(\wcomg_1)$ = $f(\wcomg_2)$. This violates the maximality constraint in Definition \ref{def:wcom}. Thus, the \wcom is unique and is a subgraph of the $(\alpha, \beta)$-community by definition.
\end{proof}
}

Following the above lemma, we can use indexing techniques to efficiently find the \abcorecom first. In this manner, the search space is limited to a much smaller subgraph compared to $G$. Then, we further search on the \abcorecom to identify the \wcom. According to this two-step algorithmic framework, we present our techniques in the following sections.

\section{Retrieve the \abcorecom in optimal time}
\label{sct:index}
In this section, we explore indexing techniques to retrieve the \abcorecom in an efficient way. 



\subsection{Basic Indexes}
\label{sct:basic_index}
In \cite{liu2019}, the authors propose the \abindex which can obtain the vertex set of the \abcore (i.e., $V(\abcoreg)$) in optimal time. However, to obtain $\abcorecgq$ after having $V(\abcoreg)$, we still need to traverse all the neighbors of each vertex in $\abcorecgq$ (starting from the query vertex) including those neighbors which are not in $\abcorecgq$. This process needs $O(|V(\abcorecgq)| \cdot \sum_{v \in V(\abcorecgq)}{\degree(v, G)})$ time and when $\frac{|\size(\abcorecgq)|}{\sum_{v \in V(\abcorecgq)}{\degree(v, G)}}$ is small, it may need to access many additional edges not in the queried community. Motivated by this, we explore how to construct an index to support optimal retrieval of the \abcorecom (i.e., optimal retrieval of \abcorecs). 

By Definition \ref{def:abcore}, we have the following lemma.

\begin{lemma}
\label{lemma:abcore}
$(\alpha,\beta)$-core $ \subseteq  (\alpha',\beta')$-core if $\alpha \geq \alpha'$ and $\beta \geq \beta'$.
\end{lemma}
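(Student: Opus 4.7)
The plan is to prove the containment by showing that the $(\alpha,\beta)$-core already satisfies the defining degree constraints of an $(\alpha',\beta')$-core, and then invoke the maximality clause in Definition~\ref{def:abcore} to push it inside the $(\alpha',\beta')$-core. This is the natural monotonicity proof one uses for $k$-cores in the unipartite setting, adapted to the two-sided degree thresholds.

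Concretely, let $H$ denote the $(\alpha,\beta)$-core of $G$ and let $H'$ denote the $(\alpha',\beta')$-core. First I would observe that for every $u \in U(H)$, Definition~\ref{def:abcore} gives $\deg(u, H) \geq \alpha$, and since $\alpha \geq \alpha'$ by hypothesis, we get $\deg(u, H) \geq \alpha'$. Symmetrically, for every $v \in L(H)$, $\deg(v, H) \geq \beta \geq \beta'$. Hence $H$ itself fulfils condition (1) of being an $(\alpha',\beta')$-core, although it need not be maximal.

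Next I would invoke the maximality condition (2) in Definition~\ref{def:abcore} applied to $H'$: $H'$ is the unique maximal subgraph of $G$ satisfying the $(\alpha',\beta')$ degree requirements. Since $H$ is a subgraph of $G$ satisfying those same requirements, $H$ must be contained in $H'$; otherwise $H \cup H'$ would still satisfy the degree constraints (unioning vertex sets can only increase degrees), contradicting the maximality of $H'$. This yields $H \subseteq H'$, which is exactly the claimed inclusion.

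I do not anticipate any real obstacle here; the proof is essentially a one-line consequence of the definitions, and the only mildly delicate point is making the maximality step rigorous (i.e., justifying that one cannot have a strictly larger subgraph than $H'$ also satisfying the $(\alpha',\beta')$ constraints, which is immediate from the uniqueness of the maximal such subgraph implied by Definition~\ref{def:abcore}). The proof will therefore be kept to a couple of sentences.
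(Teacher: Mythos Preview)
Your argument is correct and matches the paper's treatment: the paper simply states that the lemma follows ``By Definition~\ref{def:abcore}'' without spelling out a proof, and the monotonicity-plus-maximality reasoning you give is exactly the standard justification one would supply for that claim.
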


We also define the $\alpha$-offset and the $\beta$-offset of a vertex as follows. 

\begin{definition} {\bf ($\alpha$-/$\beta$-offset)}
\label{definition:offset}
Given a vertex $u \in V(G)$ and an $\alpha$ value, its $\alpha$-offset denoted as $s_a(u, \alpha)$ is the maximal $\beta$ value where $u$ can be contained in an \abcore. If $u$ is not contained in $(\alpha,1)$-core, $s_a(u, \alpha)$ = 0. Symmetrically, the $\beta$-offset $s_b(u, \beta)$ of $u$ is the maximal $\alpha$ value where $u$ can be contained in an \abcore.
\end{definition}

\begin{figure}[hbt]
\begin{centering}
\includegraphics[trim=0 10 0 15,width=0.46\textwidth]{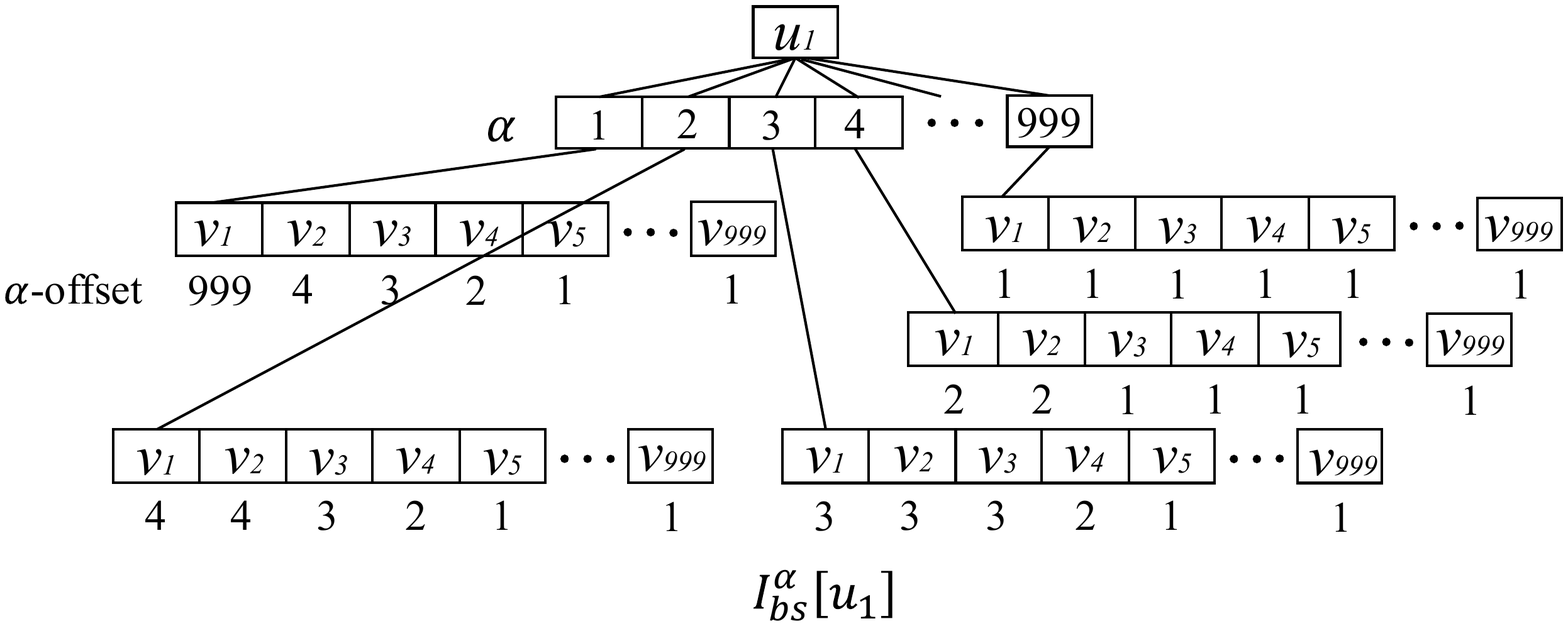}
\vspace{-2mm}
\caption{$\indexbsa[u_1]$ of $G$, edge weights are omitted}
\label{fig:index_bsa}
\end{centering}
\end{figure}

\begin{algorithm}[htbp]  
    \small
    \DontPrintSemicolon
    \caption{Index Construction of $\indexbsa$}  
    \label{algorithm:bsa_index_construction}
    \LinesNumbered
    \KwIn{$G$} 
    \KwOut{$\indexbsa$}
    $\alpha \gets 1$;\\
    {\color{black} $\alpha_{max} \gets$ the maximal vertex degree in $U(G)$;}\\
    \While{$\alpha \leq \alpha_{max}$} {
    compute $s_a(u, \alpha)$ for each vertex $u \in V(G)$;\\
    \ForEach{$u \in (\alpha, 1)\textnormal{-}core$}{
        \ForEach{$v \in N(u, G)$}{
            \If{$s_a(v, \alpha) \geq 1$}{
                $\indexbsa[u][\alpha] \gets \{v, w(u, v), s_a(v, \alpha$)\};\\
            }
        }
        sort $\indexbsa[u][\alpha]$ in decreasing order of their $\alpha$-offsets;\\
    }
    $\alpha \gets \alpha + 1$;\\
    }
    \Return{$\indexbsa$};
\end{algorithm}

\begin{figure}[t]
\begin{centering}
\includegraphics[trim=0 10 0 15,width=0.42\textwidth]{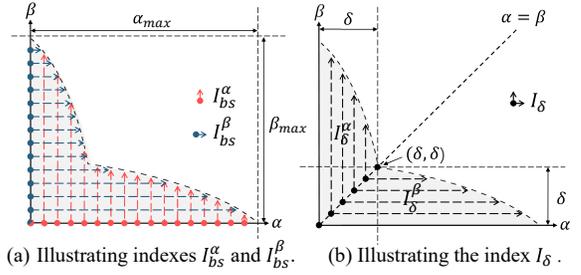}
\caption{Illustrating the ideas of indexing techniques}
\label{fig:index_g}
\vspace*{-1mm}
\end{centering}
\end{figure}

Since \abcore follows a hierarchical structure according to Lemma \ref{lemma:abcore}, an index can be constructed in the following way. For each vertex $u$, its $\alpha$-offset indicates that $u$ is contained in the $(\alpha,s_a(u, \alpha))$-core and is not contained in the $(\alpha, s_a(u, \alpha)\textnormal{+}1)$-core. According to Lemma \ref{lemma:abcore}, if $u$ is contained in the $(\alpha, s_a(u, \alpha))$-core, it is also contained in the \abcore with $\beta \leq s_a(u, \alpha)$. {\color{black} As shown in Figure \ref{fig:index_g}(a), the shaded area represents all the valid combinations of $\alpha$ and $\beta$ where an \abcorecom exists. As illustrated, we can organize the \abcores hierarchically and construct the basic index $\indexbsa$ as shown in Algorithm \ref{algorithm:bsa_index_construction}.  Firstly, we obtain $\alpha_{max}$ which is the maximal $\alpha$ value such that an $(\alpha, 1)$-core exists and it is equal to the maximal vertex degree in $U(G)$.} We then compute the $\alpha$-offset for each vertex. For each vertex $u$ and $\alpha$ combination (where $u$ exists in $(\alpha, 1)$\textnormal{-}core), we create an adjacent list $\indexbsa[u][\alpha]$ to store its neighbors. In $\indexbsa[u][\alpha]$, we sort $u$'s neighbors in non-increasing order of their $\alpha$-offsets and remove these neighbors with $\alpha$-offsets equal to zero. Figure \ref{fig:index_bsa} is an example which shows $\indexbsa[u_1]$ of $G$ in Figure \ref{fig:index_idea}(a). We can see that $\indexbsa[u_1]$ contains the neighbors of $u_1$ of different $\alpha$ values. 

\begin{algorithm}[htbp]  
    \small
    \DontPrintSemicolon
    \caption{Query based on $\indexbsa$}  
    \label{algorithm:query}
    \LinesNumbered
    \KwIn{$G$, $q$, $\alpha$, $\beta$, $\indexbsa$;} 
    \KwOut{$\abcorecgq$}
    $Q \gets q$;\\
    $visited(q) \gets true$;\\
    \While{$Q$ is not empty}{
        $u \gets Q.pop()$;\\
        \ForEach{$v \in \indexbsa[u][\alpha]$}{
            \If{$s_a{(v, \alpha)} \geq \beta$ }{
                $\abcorecgq \gets (u,v)$ if $u \in L(G)$;\\
                \If{$visited(v)=false$ }{
                    $Q.push(v)$;\\
                    $visited(v) \gets true$;\\
                }
            } \Else {
            \textbf{break};\\
            }
        }
    }
    \Return{$\abcorecgq$};
\end{algorithm}

\noindent
{\bf Optimal retrieval of $\abcorecgq$ based on $\indexbsa$.}
Given a query vertex $q$, Algorithm \ref{algorithm:query} illustrates the query process of the \abcorecom (i.e., $\abcorecgq$) based on $\indexbsa$. When querying $\abcorecgq$, we first put the query vertex into the queue. Then, we pop the vertex $u$ from the queue, and visit the adjacent list $\indexbsa[u][\alpha]$ to obtain the neighbors of $u$ with $\alpha$-offset $\geq \beta$. For each valid neighbor $v$, we add the edge $(u, v)$ into $\abcorecgq$ if $u \in L(G)$ to avoid duplication. Then, we put these valid neighbors into the queue and repeat this process until the queue is empty. Since the neighbors are sorted in non-increasing order of their $\alpha$-offsets, we can early terminate the traversal of the adjacent list when the $\alpha$-offset of a vertex is smaller than the given $\beta$. 

{\color{black}
\begin{lemma}
\label{lemma:optimal}
Given a bipartite graph $G$ and a query vertex $q$, Algorithm \ref{algorithm:query} computes $\abcorecgq$ in $O(\size(\abcorecgq))$ time, which is optimal.
\end{lemma}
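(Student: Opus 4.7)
The plan is to decompose the claim into three parts: correctness of Algorithm \ref{algorithm:query}, the running-time upper bound, and a matching output-size lower bound establishing optimality.

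For correctness, I would show that the BFS produces exactly the vertices and edges of $\abcorecgq$. By Definition \ref{definition:offset}, a neighbor $v$ of $u$ satisfies $s_a(v,\alpha) \geq \beta$ precisely when $v \in V(\abcoreg)$. Combined with Lemma \ref{lemma:abcore} and the assumption that $q \in V(\abcoreg)$, restricting the BFS to edges whose far endpoint has $\alpha$-offset at least $\beta$ explores exactly the connected component of $\abcoreg$ that contains $q$, which by Definitions \ref{def:abcorec} and \ref{def:abcorecom} equals $V(\abcorecgq)$. Each edge of $\abcorecgq$ has both endpoints in this set, and Algorithm \ref{algorithm:query} emits $(u,v)$ only when $u \in L(G)$, ensuring each edge is added exactly once.

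For the running-time upper bound, I would exploit the sorted order of $\indexbsa[u][\alpha]$. When $u$ is popped, the inner loop scans entries in non-increasing order of $\alpha$-offset and halts at the first entry with offset strictly less than $\beta$, so at most $\degree(u,\abcorecgq)+1$ entries are inspected, using the observation that for $u \in V(\abcorecgq)$ every neighbor of $u$ in $\abcoreg$ lies in the same connected component and hence in $\abcorecgq$. Summing over all visited vertices gives
\begin{equation*}
\sum_{u \in V(\abcorecgq)} \bigl(\degree(u,\abcorecgq) + 1\bigr) = 2\,\size(\abcorecgq) + |V(\abcorecgq)| = O(\size(\abcorecgq)),
\end{equation*}
where the last step uses the standing assumption from Section \ref{sct:preliminaries} that every vertex has at least one incident edge, so $|V(\abcorecgq)| \leq 2\,\size(\abcorecgq)$. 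The $visited$ flag ensures each vertex is enqueued at most once, so queue bookkeeping fits within the same bound.

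For optimality, the output contains $\size(\abcorecgq)$ edges, and any algorithm that materializes $\abcorecgq$ must write down each of them, yielding a trivial lower bound of $\Omega(\size(\abcorecgq))$. This matches the upper bound, so Algorithm \ref{algorithm:query} is time-optimal up to constants. The main obstacle I anticipate is cleanly accounting for the per-vertex ``one extra'' scan caused by the \textbf{break} and ensuring the BFS never strays outside $\abcorecgq$; both issues are handled by noting that enqueuing is gated by the same $\alpha$-offset test that defines membership in $\abcoreg$, and by invoking the no-isolated-vertex assumption to fold the $|V(\abcorecgq)|$ term into $O(\size(\abcorecgq))$.
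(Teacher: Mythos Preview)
Your proposal is correct and follows essentially the same approach as the paper: both argue that every scanned $(u,v)$ pair corresponds to an edge of $\abcorecgq$, that each vertex is enqueued at most once, and that linearity in the output size is optimal. Your treatment is simply more thorough---you explicitly justify correctness via Definition~\ref{definition:offset} and account for the ``one extra'' entry inspected before the \textbf{break}, points the paper's short proof glosses over.
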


\vspace{-0.4cm}
\begin{proof}
In Algorithm \ref{algorithm:query}, for each $u \in Q$, since there is no duplicate vertex in $\indexbsa[u][\alpha]$ and only its neighbor $v \in \indexbsa[u][\alpha]$ with $s_a{(v, \alpha)} \geq \beta$ can be accessed, each $u$ and $v$ combination corresponds to an edge in $\abcorecgq$. In addition, since each vertex can be only added once into $Q$ according to lines 8 - 10, Algorithm \ref{algorithm:query} computes $\abcorecgq$ in $O(\size(\abcorecgq))$ time, which is optimal as it is linear to the result size.
\end{proof}
}
\vspace{-0.3cm}
\begin{example}
Considering the graph in Figure \ref{fig:index_idea} and $\indexbsa[u_1]$ in Figure \ref{fig:index_bsa}, if we want to get the $(3, 3)$-community of $u_1$ $C_{3,3}{(u_1)}$, we first traverse $\indexbsa[u_1][3]$ to get all the neighbors with $\alpha$-offsets $\geq 3$ which are $v_1$, $v_2$ and $v_3$. The edges $(u_1, v_1)$, $(u_1, v_2)$ and $(u_1, v_3)$ will be added into $C_{3,3}{(u_1)}$. Then, we go to the index nodes $\indexbsa[v_1][3]$, $\indexbsa[v_2][3]$ and $\indexbsa[v_3][3]$ to get unvisited vertices $u_2$ and $u_3$ with $\alpha$-offsets $\geq 3$. The edges $(u_2, v_1)$, $(u_2, v_2)$, $(u_2, v_3)$, $(u_3, v_1)$, $(u_3, v_2)$, $(u_3, v_3)$ will be added into $C_{3,3}{(u_1)}$ when accessing $\indexbsa[u_2][3]$ and $\indexbsa[u_3][3]$.
\end{example}

In addition, apart from $\indexbsa$, we can construct an index $\indexbsb$ similarly based on $\beta$-offsets which also achieves optimal query processing. For each vertex $u$ and $\beta$ combination, we create an adjacent list to store its neighbors and we sort its neighbors in non-increasing order of their $\beta$-offsets (removing these neighbors with $\beta$-offsets = 0). When querying the $\abcorecgq$, we first go to the adjacent list indexing by $q$ and $\beta$, and obtain the neighbors of $q$ with $\beta$-offset $\geq \alpha$. Then we run a similar breadth-first search as Algorithm \ref{algorithm:query} shows. {\color{black} Using $\indexbsb$, we can also achieve optimal retrieval of $\abcorecgq$ which can be proved similarly as Lemma \ref{lemma:optimal}.}

\noindent
{\bf Complexity analysis of basic indexes.}
Storing $\indexbsa$ needs $\indexbsas$ = $O(\sum_{\alpha=1}^{\alpha_{max}}(\size((\alpha, 1)\textnormal{-}core))$ space. Since $\sum_{\alpha=1}^{\alpha_{max}}(\size((\alpha, 1)\textnormal{-}core) \leq \sum_{\alpha=1}^{\alpha_{max}}(\size((1, 1)\textnormal{-}core))$, $\indexbsas$ is also bounded by $O(\alpha_{max} \cdot m)$. Similarly, $\indexbsb$ needs $O(\sum_{\beta=1}^{\beta_{max}}(\size((1, \beta)\textnormal{-}core))$ = $O(\beta_{max} \cdot m)$ space. 

In addition, the time complexity of constructing $\indexbsa$ is $\indexbsat$ = $O(\alpha_{max} \cdot m)$. This is because for $\alpha$ from 1 to $\alpha_{max}$, we can perform the peeling algorithm on each $(\alpha, 1)$-core to get the $\alpha$-offset for each vertex first. This process needs $O(\alpha_{max} \cdot m)$ time. Then, for each vertex $u$, we create at most $\alpha_{max}$ adjacent lists to store its neighbors which needs $O(\alpha_{max} \cdot m)$ time. Similarly, the time complexity of constructing $\indexbsb$ is $\indexbsat$ = $O(\beta_{max} \cdot m)$.

\subsection{The Degeneracy-bounded Index $\indexad$}
\label{sct:delta_index}

Reviewing $\indexbsa$ and $\indexbsb$, we can see that it is hard to handle high degree vertices in $U(G)$($L(G)$) using $\indexbsa$($\indexbsb$). This is because if these vertices exist in an \abcore with large $\alpha$ (or $\beta$) value, according to Lemma \ref{lemma:abcore}, $\indexbsa$ or $\indexbsb$ may need large space to store several copies of the neighbors of these high degree vertices. For example, in Figure \ref{fig:index_bsa}, $\indexbsa$ needs to store multiple copies of neighbors of $u_1$ since $u_1$ is contained in $(999, 1)$-core. The same issue occurs when $\indexbsb$ stores $v_1$'s neighbors. Thus, in this part, we explore how to effectively handle these high degree vertices and build an index with smaller space consumption. 

Firstly, we give the definition of degeneracy as follows. 

\begin{definition} {\bf (Degeneracy)}
\label{definition:degeneracy}
Given a bipartite graph $G$, the degeneracy of $G$ denoted as $\delta$ is the largest number where $(\delta, \delta)$-core is nonempty in $G$.
\end{definition}

Note that, $\delta$ is bounded by $\sqrt{m}$ and in practice, it is much smaller than $\sqrt{m}$ \cite{liu2019}. 

\begin{lemma}
\label{lemma:degeneracy}
Given a bipartite graph $G$, a nonempty $(\alpha,\beta)$-core in $G$ must have $min(\alpha, \beta) \leq \delta$.
\end{lemma}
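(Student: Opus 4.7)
The plan is to prove the lemma by contradiction, leveraging the hierarchical property of $(\alpha,\beta)$-cores already established in Lemma \ref{lemma:abcore} together with the definition of degeneracy.

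First, I would suppose for contradiction that there exists a nonempty $(\alpha,\beta)$-core in $G$ with $\min(\alpha,\beta) > \delta$. Set $\tau = \min(\alpha,\beta)$, so by assumption $\tau > \delta$, and clearly $\tau \leq \alpha$ and $\tau \leq \beta$.

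Next, I would apply Lemma \ref{lemma:abcore} in the direction where smaller degree constraints yield supersets: since $\alpha \geq \tau$ and $\beta \geq \tau$, we have $(\alpha,\beta)$-core $\subseteq (\tau,\tau)$-core. Because the $(\alpha,\beta)$-core is nonempty by hypothesis, the $(\tau,\tau)$-core must also be nonempty.

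Finally, I would invoke the definition of degeneracy (Definition \ref{definition:degeneracy}): $\delta$ is the \emph{largest} integer for which the $(\delta,\delta)$-core is nonempty. The existence of a nonempty $(\tau,\tau)$-core with $\tau > \delta$ directly contradicts this maximality, completing the proof. The argument is essentially routine once the hierarchical containment is correctly applied; the only subtlety to double-check is the direction of the inclusion in Lemma \ref{lemma:abcore}, since confusing ``larger constraint'' with ``larger subgraph'' would derail the contradiction. No further obstacles are anticipated.
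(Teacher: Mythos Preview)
Your proof is correct and follows essentially the same contradiction argument as the paper: assuming $\min(\alpha,\beta) > \delta$ forces a nonempty $(\tau,\tau)$-core with $\tau > \delta$, contradicting the maximality in Definition~\ref{definition:degeneracy}. Your version is arguably cleaner, since you unify the cases via $\tau = \min(\alpha,\beta)$ and explicitly cite Lemma~\ref{lemma:abcore}, whereas the paper splits into $\alpha < \beta$ and $\beta < \alpha$ and leaves the hierarchical containment implicit.
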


\vspace{-0.4cm}
{\color{black}
\begin{proof}
We prove this lemma by contradiction. Suppose a nonempty $(\alpha, \beta)$-core exists in $G$ with $\alpha < \beta$ and $\alpha > \delta$. Then we will have $\alpha \geq \delta+1$ and $\beta \geq \delta+1$ which contradicts to the definition of $\delta$. Similarly, we cannot have an nonempty $(\alpha, \beta)$-core existing in $G$ with $\beta < \alpha$ and $\beta > \delta$. Thus, a nonempty $(\alpha,\beta)$-core in $G$ must have $min(\alpha, \beta) \leq \delta$.
\end{proof}
}
\vspace{-0.2cm}

Based on Lemma \ref{lemma:degeneracy}, we can observe that, given query parameters $\alpha$ and $\beta$, a partial index of $\indexbsa$ which only stores adjacent lists of $u$ for each $u$ and $\alpha$ combinations with $\alpha \leq \delta$ is enough to handle queries when $\alpha = min(\alpha, \beta)$. Similarly, a partial index of $\indexbsb$ which only stores adjacent lists under ($u, \beta$) combinations with $\beta \leq \delta$ is enough to handle queries when $\beta = min(\alpha, \beta)$.
Based on the above observation, we propose the index $\indexad$ as follows.

\begin{figure}[hbt]
\begin{centering}
\includegraphics[trim=0 10 0 15,width=0.46\textwidth]{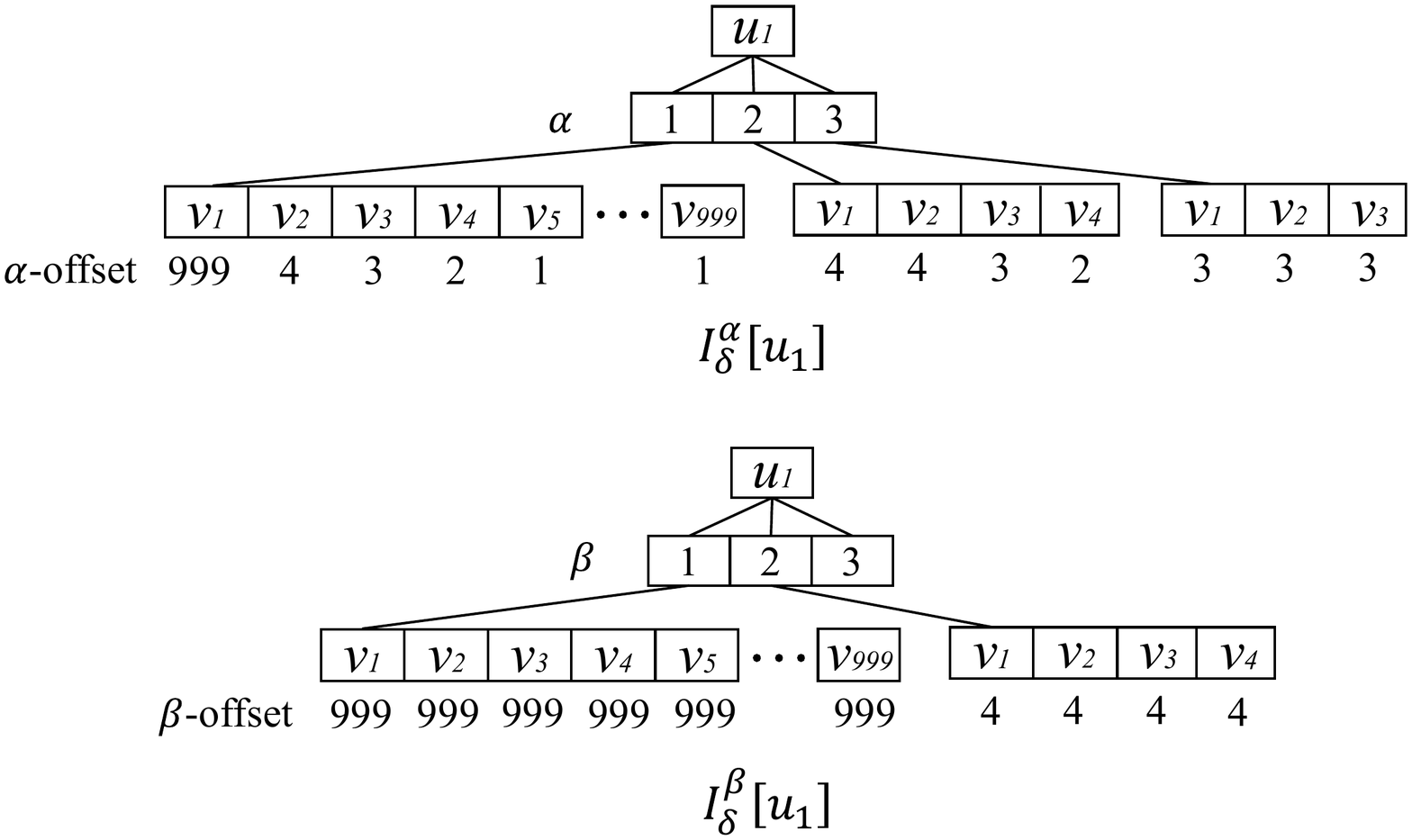}
\caption{$\indexad[u_1]$ of $G$, edge weights are omitted}
\label{fig:index_delta}
\end{centering}
\end{figure}

\noindent
{\bf Index Overview.} $\indexad$ contains two parts $\indexada$ and $\indexadb$ to cover all the $(\alpha, \beta)$-communities as illustrated in Figure \ref{fig:index_g}(b). 

In $\indexada$, for each vertex $u$ and $\alpha \leq \delta$ where $u$ exists in the $(\alpha, \alpha)$-core, we create an adjacent list $\indexada[u][\alpha]$ to store its neighbors. Note that, the neighbors are sorted in non-increasing order of their $\alpha$-offsets and the neighbors with $\alpha$-offsets less than $\alpha$ are removed. 

In $\indexadb$, for each vertex $u$ and $\beta \leq \delta$ where $u$ exists in the $(\beta,\beta)$-core, we create an adjacent list $\indexadb[u][\beta]$ to store its neighbors with $\beta$-offsets larger than $\beta$. The neighbors are sorted in non-increasing order of their $\beta$-offsets and the neighbors with $\beta$-offsets less or equal than $\beta$ are removed. Figure \ref{fig:index_delta} is an example of $\indexad[u_1]$ of $G$ in Figure \ref{fig:index_idea}(a). We can see that it consists of two parts $\indexada[u_1]$ and $\indexadb[u_1]$.

\noindent
{\bf Optimal retrieval of $\abcorecgq$ based on $\indexad$.} The query processing of $\abcorecgq$ based on $\indexad$ is similar to the query processing based on the basic indexes. The difference is that we need to choose to use $\indexada$ or $\indexadb$ at first. If the query parameter $\alpha \leq \delta$, we use $\indexada$ to support the query process. Otherwise, we go for $\indexadb$ to obtain the $\abcorecgq$. Since only valid edges are touched using $\indexad$, we can also obtain $\abcorecgq$ in $O(\size(\abcorecgq))$ time which is optimal. {\color{black} The proof of optimality is similar as Lemma \ref{lemma:optimal} and we omit it here due to the space limit.}

\begin{example}
Considering $G$ in Figure \ref{fig:index_idea} and $\indexad[u_1]$ in Figure \ref{fig:index_delta}, if we want to get the $(3, 3)$-community of $u_1$ $C_{3,3}{(u_1)}$, since $\alpha = \beta$, we first traverse $\indexada[u_1][3]$ to get all the neighbors with $\alpha$-offsets $\geq 3$, which are $v_1$, $v_2$ and $v_3$. The edges $(u_1, v_1)$, $(u_1, v_2)$ and $(u_1, v_3)$ will be added into $C_{3,3}{(u_1)}$. Then, we go to the index nodes $\indexada[v_1][3]$, $\indexada[v_2][3]$ and $\indexada[v_3][3]$ to get unvisited vertices $u_2$ and $u_3$ with $\alpha$-offsets $\geq 3$. The edges $(u_2, v_1)$, $(u_2, v_2)$, $(u_2, v_3)$, $(u_3, v_1)$, $(u_3, v_2)$, $(u_3, v_3)$ will be added into $C_{3,3}{(u_1)}$ when accessing $\indexada[u_2][3]$ and $\indexada[u_3][3]$.
\end{example}



\begin{lemma}
\label{lemma:indexad}
The space complexity of $\indexad$ denoted as $\indexads$ is $O(2 \cdot \sum_{\tau=1}^{\delta}\size(R_{\tau,\tau}))$ = $O(\delta \cdot m)$.
\end{lemma}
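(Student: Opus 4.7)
The plan is to bound the two components of $\indexad$ separately and then combine them. First I would observe that an entry $\indexada[u][\alpha]$ is created only when $u \in V(R_{\alpha,\alpha})$, and it stores precisely those neighbors $v$ of $u$ satisfying $s_a(v,\alpha) \geq \alpha$, i.e., those $v$ that also lie in $V(R_{\alpha,\alpha})$. Since $R_{\alpha,\alpha}$ is maximal, any edge between two vertices in $V(R_{\alpha,\alpha})$ must itself belong to $E(R_{\alpha,\alpha})$ (otherwise adding it would yield a strict supergraph still satisfying the degree conditions, contradicting Definition~\ref{def:abcore}). Therefore each entry of a list $\indexada[u][\alpha]$ corresponds bijectively to one endpoint-side of an edge of $R_{\alpha,\alpha}$.

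Summing the list lengths for a fixed $\alpha$ then gives
\[
\sum_{u \in V(R_{\alpha,\alpha})} \degree(u, R_{\alpha,\alpha}) \; = \; 2 \cdot \size(R_{\alpha,\alpha}),
\]
and aggregating over $\alpha \in [1, \delta]$ yields total storage $\sum_{\alpha=1}^{\delta} 2 \cdot \size(R_{\alpha,\alpha})$ for $\indexada$. A symmetric argument applied to $\indexadb$ gives an upper bound of $\sum_{\beta=1}^{\delta} 2 \cdot \size(R_{\beta,\beta})$; the strict inequality $s_b(v,\beta) > \beta$ used there only shrinks the count (each retained $v$ still lies in $R_{\beta,\beta}$ by Lemma~\ref{lemma:abcore}), so the bound still holds. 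Adding the two contributions yields $\indexads = O(\sum_{\tau=1}^{\delta} \size(R_{\tau,\tau}))$, which matches the first expression in the lemma up to an absolute constant.

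For the second bound, I would invoke Lemma~\ref{lemma:abcore}: since $(\tau,\tau)$-core $\subseteq (1,1)$-core $= G$ for every $\tau \geq 1$, we have $\size(R_{\tau,\tau}) \leq m$, so $\sum_{\tau=1}^{\delta} \size(R_{\tau,\tau}) \leq \delta \cdot m$, giving $\indexads = O(\delta \cdot m)$. There is no genuine obstacle here; the work is purely bookkeeping. The one subtlety worth double-checking is that the asymmetric filtering rule of $\indexadb$ (strictly greater than $\beta$) does not alter the asymptotic count, since every retained neighbor still corresponds to an edge sitting inside some $R_{\tau,\tau}$ with $\tau \leq \delta$, and the per-$\tau$ handshake identity $\sum_u \degree(u, R_{\tau,\tau}) = 2\,\size(R_{\tau,\tau})$ is an upper bound for both index parts.
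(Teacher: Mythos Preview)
Your proposal is correct and follows essentially the same approach as the paper: bound each of $\indexada$ and $\indexadb$ by $O(\sum_{\tau=1}^{\delta}\size(R_{\tau,\tau}))$ via the degree-sum (handshake) identity within each $R_{\tau,\tau}$, then use $\size(R_{\tau,\tau}) \leq m$ to get $O(\delta \cdot m)$. Your write-up is simply more explicit than the paper's, in particular supplying the maximality argument for why retained neighbors correspond to edges of $R_{\alpha,\alpha}$ and noting that the strict filter in $\indexadb$ can only shrink the count.
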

\vspace{-0.3cm}
{\color{black}
\begin{proof}
For each $\alpha \in [1, \delta]$ and $u \in R_{\alpha,\alpha}$, we need to store at most $\degree(u, R_{\alpha,\alpha})$ $u'$s neighbors in $\indexada$. Thus, $\indexada$ needs $O(\sum_{\alpha=1}^{\delta}\sum_{u \in R_{\alpha,\alpha}}\degree(u, R_{\alpha,\alpha}))$ = $O(\sum_{\alpha=1}^{\delta}\size(R_{\alpha,\alpha}))$=$O(\delta \cdot m)$ space. Similarly, $\indexadb$ also needs $O(\sum_{\beta=1}^{\delta}(\size(R_{\beta,\beta}))=O(\delta \cdot m)$ space. In total, the space for storing $\indexad$ is $O(\delta \cdot m)$.
\end{proof}
}
\vspace{-0.3cm}

\begin{algorithm}[htbp]  
    \small
    \DontPrintSemicolon
    \caption{Degeneracy-bounded Index Construction}  
    \label{algorithm:ad_index_construction}
    \LinesNumbered
    \KwIn{$G$} 
    \KwOut{$\indexad$}
    $\tau \gets 1$;\\
    {\color{black} compute $\delta$ using the $k$-core decomposition algorithm;}\\
    \While{$\tau \leq \delta$} {
    compute $\alpha$-offset $s_a(u, \tau)$ and $\beta$-offset $s_b(u, \tau)$ for each vertex $u \in V(G)$;\\
    \ForEach{$u \in (\tau, \tau)\textnormal{-}core$}{
        \ForEach{$v \in N(u, G)$}{
            \If{$s_a(v, \tau) \geq \tau$}{
                $\indexada[u][\tau] \gets \{v, w(u, v), s_a(v, \tau$)\};\\
            }
            \If{$s_b(v, \tau) > \tau$}{
               $\indexadb[u][\tau] \gets \{v,  w(u, v), s_b(v, \tau$)\};\\
            }
        }
        sort $\indexada[u][\tau]$ in decreasing order of their $\alpha$-offsets;\\
        sort $\indexadb[u][\tau]$ in decreasing order of their $\beta$-offsets;\\
    }
    $\tau \gets \tau + 1$;\\
    }
    \Return{$\indexad$};
\end{algorithm}

\noindent
{\bf Index Construction.} {\color{black} The construction algorithm of $\indexad$ is shown in Algorithm \ref{algorithm:ad_index_construction}. We first compute $\delta$ using the $k$-core decomposition algorithm in  \cite{khaouid2015k} since $\delta$ is equal to the maximum core number in $G$. Then, for each vertex $u$, we compute its $\alpha$-offset for each $\alpha \leq \delta$ and its $\beta$-offset for each $\beta \leq \delta$. These values can be obtained by the peeling algorithm in \cite{ding2017efficient}. Then, we loop $\tau$ from 1 to $\delta$ and add the valid neighbors of the vertices in the $(\tau, \tau)\textnormal{-}core$ into $\indexad$.

\begin{lemma}
\label{lemma:indexadc}
The time complexity of Algorithm \ref{algorithm:ad_index_construction} is $O(\delta \cdot m)$.
\end{lemma}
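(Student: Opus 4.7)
The plan is to account for each line of Algorithm~\ref{algorithm:ad_index_construction} separately and show that, summed over $\tau\in[1,\delta]$, each contributes at most $O(\delta\cdot m)$.

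First I would dispose of the preprocessing on line~2: computing $\delta$ via bipartite $k$-core decomposition takes $O(m)$ time \cite{khaouid2015k}, which is absorbed into $O(\delta\cdot m)$. Next, for each fixed $\tau$, the offsets $s_a(u,\tau)$ and $s_b(u,\tau)$ on line~4 can be computed by the peeling procedure of \cite{ding2017efficient} applied to the $(\tau,1)$-core and $(1,\tau)$-core respectively. Each peeling pass is linear in the number of edges it touches, i.e.\ $O(m)$, so line~4 contributes $O(m)$ per $\tau$ and $O(\delta\cdot m)$ in total.

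For the double loop on lines~5--10, I would bound the work per $\tau$ by $\sum_{u\in R_{\tau,\tau}}\deg(u,G)$. Since every edge of $G$ is charged at most twice (once per endpoint that lies in $R_{\tau,\tau}$), this sum is at most $2m$, so lines~5--10 cost $O(m)$ per $\tau$ and $O(\delta\cdot m)$ in total. Crucially, each insertion into $\indexada[u][\tau]$ or $\indexadb[u][\tau]$ can be done in $O(1)$ amortized time using dynamic arrays.

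The main obstacle is the sort on lines~11--12: a comparison-based sort would introduce an extra $\log$ factor. My plan is to replace it by a single global bucket sort per $\tau$. The offset values are integers in $[0,\beta_{\max}]$ (resp.\ $[0,\alpha_{\max}]$), both bounded by $n\leq m+1$. Hence for each fixed $\tau$ I would allocate a bucket array of size $O(n)$, scatter every pair $(u,v)$ discovered on lines~7--10 into the bucket indexed by its offset, and then sweep buckets in decreasing order, appending each entry to the appropriate list $\indexada[u][\tau]$ or $\indexadb[u][\tau]$. This realises all sorts within a single pass of total cost $O(n+m)=O(m)$ per $\tau$, i.e.\ $O(\delta\cdot m)$ overall. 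Adding the three contributions yields the stated $O(\delta\cdot m)$ bound.
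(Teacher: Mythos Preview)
Your proposal is correct and follows essentially the same approach as the paper's own proof: for each $\tau\in[1,\delta]$, compute the $\alpha$- and $\beta$-offsets in $O(m)$ time via peeling/core decomposition, and replace comparison sorting by a linear-time bin (bucket) sort so that lines~11--12 also cost $O(m)$ per $\tau$. Your write-up is somewhat more detailed than the paper's (you explicitly account for line~2 and for the scan on lines~5--10), but the key ideas and the overall structure match.
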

\vspace{-0.2cm}
\begin{proof}
\label{proof:indexadc}
For each $\tau$, we can first obtain the $(\tau, 1)$-core and the $\alpha$-offsets of all the vertices can be computed using the core decomposition algorithm \cite{khaouid2015k} in $O(m)$ time. The $\beta$-offsets of all the vertices can also be computed in $O(m)$ time similarly. Then, sorting $\indexada[u][\tau]$ and $\indexadb[u][\tau]$ for each vertex $u$ also needs $O(m)$ time in total by using bin sort \cite{khaouid2015k}. Since $\tau \in [1, \delta]$, the time complexity of Algorithm \ref{algorithm:ad_index_construction} is $O(\delta \cdot m)$.
\end{proof}
}

{\color{black}
\noindent
{\bf Discussion of index maintenance.} When graphs are updated dynamically, it is inefficient to reconstruct the indexes from scratch. Thus, we discuss the main idea of the incremental algorithms for maintaining $\indexad$. Other indexes in this paper can be maintained in a similar way. 

\noindent
{\em \underline{Edge insertion.}} Suppose an edge $(u, v)$ is inserted into $G$. For each $\alpha \leq \delta$, we first add $u(v)$ into $\indexada[v][\alpha]$ ($\indexada[u][\alpha]$) if $s_a(u, \alpha) \geq \alpha$ ($s_a(v, \alpha) \geq \alpha$). Then, for each $\alpha \leq \delta$, we track changes of the $\alpha$-offsets of the vertices. Note that, only the $\alpha$-offsets of the vertices in $S^+_{\alpha} = V(C_{\alpha,s_a(u, \alpha)}(u)) \cup V(C_{\alpha,s_a(v, \alpha)}(v))$ can be changed. This is because for each vertex not in $S^+_{\alpha}$, it either does not connect to $u$($v$) or $u$($v$) already exists in any $(\alpha, \beta)$-connected component it belongs to when fixing $\alpha$. Thus, we obtain the induced subgraph of $S^+_{\alpha}$ from $\indexad$ and compute the new $\alpha$-offsets of the vertices in $S^+_{\alpha}$ by peeling the subgraph. If the $\alpha$-offset of the vertex $u' \in S^+_{\alpha}$ is changed, we only need to update $\indexada[v'][\alpha]$ where $v' \in N(u', G)$. Similarly, for each $\beta \leq \delta$, only the $\beta$-offsets of the vertices in $S^+_{\beta} = V(C_{s_b(u, \beta), \beta}(u)) \cup V(C_{s_b(v, \beta), \beta}(v))$ can be changed. We compute the new $\beta$-offsets of these vertices and update $\indexadb$ in a similar way. Note that after the new edge is inserted, the value of $\delta$ can be increased by 1. If $\delta$ is increased, we compute the new index elements for $\delta+1$. 

\noindent
{\em \underline{Edge removal.}} Suppose an edge $(u, v)$ is removed from $G$. For each $\alpha \leq \delta$, we first remove $u(v)$ from $\indexada[v][\alpha]$ ($\indexada[u][\alpha]$) if $s_a(u, \alpha) \geq \alpha$ ($s_a(v, \alpha) \geq \alpha$). Similar as the insertion case, for each $\alpha$, only the $\alpha$-offsets of the vertices in $S^-_{\alpha} = V(C_{\alpha,1}(u)$\textbackslash $C_{\alpha,s_a(u, \alpha)+1}(u)) \cup V(C_{\alpha,1}(v)$\textbackslash $C_{\alpha,s_a(v, \alpha)+1}(v))$ can be changed. Thus, we recompute the $\alpha$-offsets of these vertices and update $\indexada$. $\indexadb$ can also be updated similarly.

\noindent
{\bf Remark.} Although we are dealing with the weighted bipartite graph in this work, the indexing techniques proposed in this section can directly support finding the \abcorecom on unweighted bipartite graph. 
}
\section{Query the \wcom}
\label{sct:query}
According to the definition of \wcom, the subgraph $\abcorecgq$ obtained from the index already satisfies the connectivity constraint and the cohesiveness constraint. Thus, in this section, we introduce two query algorithms to obtain the \wcom from $\abcorecgq$ {\color{black} to further satisfy} the maximality constraint. 

\subsection{Peeling Approach}

\begin{algorithm}[htbp]  
    \small
    \DontPrintSemicolon
    \caption{\peel}  
    \label{algorithm:peel}
    \LinesNumbered
    \KwIn{$G$, $q$, $\alpha$, $\beta$;} 
    \KwOut{$\wcomg$} 
    get $\abcorecgq$ from the index;\\
    $S \gets \emptyset$; $Q \gets \emptyset$;\\
    {\color{black} sort edges of $\abcorecgq$ in non-decreasing order by weights;}\\
    \While{$\abcorecgq$ is not empty}{
        {\color{black}$\kwnospace{w_{min}} \gets$ the minimal edge weight in $\abcorecgq$\\
        \ForEach{$(u,v) \in \abcorecgq$ with $w(u,v) = \kwnospace{w_{min}}$}{
            remove $(u,v)$ from $\abcorecgq$;\\
            $S.add((u,v))$;\\
            \If{$deg(u, \abcorecgq) < \alpha \wedge u \notin Q$}{
                $Q.push(u)$;
            }
            \If{$deg(v, \abcorecgq) < \beta \wedge v \notin Q$}{
                $Q.push(v)$;
            }
        }}
        \While{$Q$ is not empty}{
            $u' \gets Q.pop()$;\\
            \ForEach{$v' \in N(u', \abcorecgq)$}{
                remove $(u',v')$ from $\abcorecgq$;\\
                $S.add((u',v'))$;\\
                \If{$v'$ does not have enough degree}{
                    $Q.push(v')$;\\
                    \If{$v'$=$q$}{
                        $G' \gets S \cup \abcorecgq$;\\
                        Obtain $\wcomg$ from $G'$\;
                        \Return{$\wcomg$};\\
                    }
                }
            }
        }
        $S = \emptyset$;\\
    }
\end{algorithm}

Here, we introduce the peeling approach as shown in Algorithm \ref{algorithm:peel}. Firstly, we retrieve $\abcorecgq$ based on the indexes proposed in Section \ref{sct:index}. {\color{black} Note that if all the edge weights are equal in $\abcorecgq$, we can just return $\abcorecgq$ as the result. Otherwise, we sort the edges in $\abcorecgq$ in non-decreasing order by weights and we initialize an edge set $S$ and a queue $Q$ to empty. After that, we run the peeling process on $\abcorecgq$. In each iteration, we remove each edge $(u,v)$ with the minimal weight in $\abcorecgq$.} Also, we add $(u,v)$ into an edge set $S$ which records the edges removed in this iteration. Due to the removal of $(u,v)$, there may exist many vertices which do not have enough degree to stay in $\abcorecgq$ (i.e., for vertex $u \in U(\abcorecgq)$, $deg(u, \abcorecgq) < \alpha$ or for vertex $v \in L(\abcorecgq)$, $deg(v, \abcorecgq) < \beta$), we also remove the edges of these vertices and add the edges into $S$. We run the peeling process until $q$ does not satisfy the degree constraint. Then, we create $G'$= $S \cup \abcorecgq$ since the edges removed in this iteration need to be recovered to form the $\wcomg$. {\color{black} Finally, we remove the vertices without enough degree in $G'$ and run a breath-first search from $q$ on $G'$ to get the connected subgraph containing $q$ which is $\wcomg$.}




\noindent
\begin{theorem}
\label{theorem:peelc}
The \peel algorithm correctly solves the \wcom search problem.
\end{theorem}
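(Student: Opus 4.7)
The plan is to align \peel with the standard ``delete light edges, then peel'' construction of the optimal community. I will maintain, as the core loop invariant, that at the beginning of each outer iteration the working variable $\abcorecgq$ equals the \abcore of the subgraph of the initial $\abcorecgq$ consisting of all edges whose weight exceeds every previously processed $w_{min}$. This holds initially by the construction of $\abcorecgq$ from the index, and the induction step amounts to observing that lines 5--11 strip a single weight class while lines 12--22 perform the standard core-peeling cascade on the surviving edges; hence the net effect of one outer iteration coincides with ``remove weight-$w_{min}$ edges, then recompute \abcore,'' regardless of the order in which equal-weight edges are handled.

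Given the invariant, I will prove optimality in two directions. Let $w^*$ denote the value of $w_{min}$ in the outer iteration that triggers the early return. For the upper bound $f(\wcomg)\le w^*$, any candidate $H$ satisfying constraints~(1) and~(2) of Definition \ref{def:wcom} with $f(H)>w^*$ lies in $\abcorecgq$ by Lemma \ref{lemma:wcom}, uses only edges of weight $>w^*$, and therefore lies in the \abcore of that weight-$>w^*$ subgraph; but the invariant says that running the cascade of the returning iteration to completion would precisely produce this core and would evict $q$ from it, so no such $H$ can contain $q$. For the lower bound, the returned graph $G'=S\cup \abcorecgq$ is, by the invariant, exactly the \abcore of the weight-$\ge w^*$ subgraph, still contains $q$, and hence its connected component of $q$ (extracted by the BFS-plus-trim step in line 21) satisfies~(1) and~(2) with $f\ge w^*$. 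Combining the two bounds yields $f(\wcomg)=w^*$.

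Maximality, constraint~(3) of Definition \ref{def:wcom}, then follows because any supergraph $G''\supset \wcomg$ inside $\abcorecgq$ with $f(G'')=w^*$ satisfying~(1) and~(2) must already lie inside the \abcore of the weight-$\ge w^*$ subgraph, i.e., inside $G'$, and then inside the $q$-component of $G'$, which is $\wcomg$; uniqueness comes from Lemma \ref{lemma:wcom}. The main obstacle I expect is the loop invariant itself: it is tempting but incorrect to identify the post-iteration state with a plain edge-weight restriction, since the cascade may remove heavy edges whose endpoints no longer meet $\alpha$ or $\beta$. A careful order-independence argument on the FIFO queue $Q$ is therefore needed to show that both the intra-class removals of lines 5--11 and the ensuing cascade of lines 12--22 together reproduce the unique \abcore of the restricted subgraph, irrespective of the tie-breaking used for equal-weight edges.
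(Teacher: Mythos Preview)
Your proposal is correct and follows essentially the same logic as the paper's proof: both rest on the fact that peeling the minimum-weight edge class and then cascading leaves exactly the \abcore of the weight-restricted subgraph, so that $q$'s eviction during the cascade certifies that no candidate with strictly larger minimum weight can contain $q$. Your version is considerably more thorough—you make the loop invariant explicit, separate the upper and lower bounds on $f(\wcomg)$, handle maximality, and flag the order-independence subtlety for the FIFO cascade—whereas the paper compresses all of this into a short contradiction argument that leaves the invariant and the lower-bound direction implicit.
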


\begin{proof}
\label{proof:newc}
According to Lemma \ref{lemma:wcom}, $\wcomg$ is a subgraph of $\abcorecgq$. Suppose there is a $G' \subseteq \abcorecgq$ satisfying the connected constraint and the cohesiveness constraint and has $f(G') > f(\wcomg)$. Since we always peel the edge with the minimal weight, $G'$ will be found after $\wcomg$. Since we peel $\abcorecgq$ until the degree of $q$ is not enough, $q \in G'$ will not have enough degree which contradicts the cohesiveness constraint. For the same reason, there exists no $G'' \supset \wcomg$ with $f(G'') = f(\wcomg)$. Thus, this theorem holds.
\end{proof}


\noindent
{\bf Time complexity.} \peel has three phases. Retrieving $\abcorecgq$ based on the index needs $(\size(\abcorecgq))$ time. Then, sorting the edges in $\abcorecgq$ needs $\sort(\abcorecgq)$ time which will be $O(\size(\abcorecgq)\cdot(log(\size(\abcorecgq))))$ if we use quick sort or $O(m')$ if we use bin sort where $m'$ equals to the maximal weight in $\abcorecgq$. After that, the whole peeling process requires $O(\size(\abcorecgq)$ time. In total, the time complexity of \peel is $O(\sort(\abcorecgq) + \size(\abcorecgq))$.

\noindent
{\bf Space complexity.} In the \peel algorithm, we need only $O(\size(\abcorecgq))$ space to store the edges in $\abcorecgq$ apart from the space used by the indexes.

\subsection{Expansion Approach}

Unlike the peeling approach which iteratively removes the edge with the minimal weight from $\abcorecgq$, in this part, we introduce the expansion approach \expand. \expand first initializes a subgraph $G^*$ as empty. Then it iteratively adds the edges with the maximal weight to $G^*$ (from $\abcorecgq$) until $G^*$ contains $\wcomg$. In this manner, if $\size(\wcomg)$ is much smaller than $\size(\abcorecgq)$, \expand can retrieve $\wcomg$ in a more efficient way compared to the peeling approach.

{\color{black} Following the above idea, we add edges with the maximal weight in $\abcorecgq$ to $G^*$ (and remove them from $\abcorecgq$) in each iteration. However, when adding an edge into $G^*$, it may not connect to $q$. Note that, we cannot discard these edges immediately since they may be connected to $q$ due to the later coming edges. Thus, the connected subgraphs in $G^*$ should be maintained in each iteration. With the help of union-find data structure  \cite{cormen2009introduction}, the connected subgraphs in $G^*$ can be maintained in constant amortized time, and we can efficiently obtain the connected subgraph containing $q$ in $G^*$.} 



\noindent
{\bf Checking the existence of $\wcomg$ in $C^*$.} Suppose $C^*$ is the connected subgraph containing $q$ in $G^*$, we can easily observe that $\wcomg$ can only be found in the iteration where $C^*$ is changed. In addition, we have the following bounds which can let us know whether $\wcomg$ is contained in $C^*$.

\begin{lemma}
\label{lemma:e1}
Given a connected subgraph $C^*$, if $\wcomg \subseteq C^*$, we have:
\[\alpha  \beta-\alpha-\beta \le |E(C^*)|-|U(C^*)|-|L(C^*)|\]
\end{lemma}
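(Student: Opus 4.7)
The plan is to proceed in two stages: first establish the inequality for $\wcomg$ itself, then lift it to $C^*$ using the connectivity of both.

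\emph{Stage 1 (bound on $\wcomg$).} The cohesiveness constraints from Definition \ref{def:wcom}, summed over each layer, give $|E(\wcomg)| \ge \alpha\,|U(\wcomg)|$ and $|E(\wcomg)| \ge \beta\,|L(\wcomg)|$. Moreover, every vertex in $L(\wcomg)$ has at least $\beta$ neighbors in $U(\wcomg)$, so $|U(\wcomg)| \ge \beta$; symmetrically $|L(\wcomg)| \ge \alpha$. Adding the two degree-sum inequalities and subtracting $2\bigl(|U(\wcomg)|+|L(\wcomg)|\bigr)$ from both sides gives
\[
2\bigl(|E(\wcomg)| - |U(\wcomg)| - |L(\wcomg)|\bigr) \ge (\alpha-2)|U(\wcomg)| + (\beta-2)|L(\wcomg)|.
\]
When $\alpha,\beta \ge 2$ both coefficients are nonnegative, so substituting $|U(\wcomg)| \ge \beta$ and $|L(\wcomg)| \ge \alpha$ immediately yields $|E(\wcomg)| - |U(\wcomg)| - |L(\wcomg)| \ge \alpha\beta - \alpha - \beta$. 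For the corner cases $\alpha = 1$ or $\beta = 1$ the target simplifies to $-1$, and it follows directly from $|E(\wcomg)| \ge |U(\wcomg)| + |L(\wcomg)| - 1$, which holds because $\wcomg$ is connected.

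\emph{Stage 2 (lift to $C^*$).} I will establish a monotonicity fact: if $H' \subseteq H$ with both connected, then $|E(H)| - |E(H')| \ge |V(H)| - |V(H')|$. To see this, run a BFS in $H$ whose initial frontier is the whole set $V(H')$; each vertex $v \in V(H)\setminus V(H')$ is first reached through a unique ``parent edge'' $f(v)$, and this edge must lie in $E(H)\setminus E(H')$ since $v$ is not incident to any edge of $H'$. The map $f$ is injective, which gives the inequality. Applied with $H = C^*$ and $H' = \wcomg$ and rearranged, this reads $|E(C^*)| - |U(C^*)| - |L(C^*)| \ge |E(\wcomg)| - |U(\wcomg)| - |L(\wcomg)|$, and chaining with Stage 1 closes the argument.

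The main obstacle is the corner case $\alpha = 1$ or $\beta = 1$: the degree-sum bound becomes too weak because one of the coefficients $\alpha-2$, $\beta-2$ turns negative while the corresponding layer size has no useful upper bound to pair with it. The resolution is the convenient observation that $\alpha\beta - \alpha - \beta$ collapses to exactly $-1$ in this regime, and this is already implied by the connectivity of $\wcomg$, so the two regimes fit together seamlessly.
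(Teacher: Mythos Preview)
Your proof is correct and organized differently from the paper's. The paper bounds $|E(C^*)|$ in a single pass: it counts the edges inside $\wcomg$ via $\max\{\alpha\,|U(\wcomg)|,\ \beta\,|L(\wcomg)|\}$, adds one edge per vertex of $V(C^*)\setminus V(\wcomg)$ needed for connectivity (the same spanning-forest idea you isolate in Stage~2), and then case-splits on which of $\alpha\,|U(\wcomg)|$ and $\beta\,|L(\wcomg)|$ is larger before substituting $|U(\wcomg)|\ge\beta$ and $|L(\wcomg)|\ge\alpha$. Your two-stage decomposition is more modular: Stage~2 packages the connectivity count as a clean monotonicity lemma ($|E|-|V|$ is non-decreasing under connected extensions), and Stage~1 \emph{averages} the two degree-sum bounds rather than taking their maximum. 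The trade-off is that averaging forces the separate $\alpha=1$ or $\beta=1$ corner case (one coefficient $\alpha-2$ or $\beta-2$ goes negative), whereas the paper's case split is instead on which layer dominates; in return you obtain a symmetric argument and a reusable monotonicity statement.
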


\begin{proof}
Since $C^*$ is a connected subgraph, we have $|E(C^*)| \ge |U(C^*)|+|L(C^*)|-1$.
According to the cohesiveness constraint of $\wcomg$, $\wcomg$ has at least $max\{\alpha\cdot|U(\wcomg)|, \beta\cdot|L(\wcomg)|\}$ edges. In addition, the number of incident edges of vertices in $V(C^*) \setminus V(\wcomg)$ is at least $|U(C^*)|+|L(C^*)|-|U(\wcomg)|-|L(\wcomg)|$ to ensure $C^*$ is connected. 

Hence, when $\alpha\cdot|U(\wcomg)|\ge \beta\cdot|L(\wcomg)|$, $|E(C^*)|\ge |U(C^*)|+|L(C^*)|-|U(\wcomg)|-|L(\wcomg)|+\alpha\cdot|U(\wcomg)|$.
It is immediate that $\alpha\le|L(\wcomg)|$ and $\beta\le|U(\wcomg)|$. Thus, we have $(\alpha-1)\cdot|U(\wcomg)|-|L(\wcomg)| \le |E(C^*)|-|U(C^*)|-|L(C^*)|$. By transformation, we have  $(\alpha-1)\cdot\beta-\alpha\le |E(C^*)|-|U(C^*)|-|L(C^*)|$. Then, we get $\alpha\beta-\alpha-\beta\le |E(C^*)|-|U(C^*)|-|L(C^*)|$.

When $\alpha\cdot|U(\wcomg)|< \beta\cdot|L(\wcomg)|$, $|E(C^*)| \ge |U(C^*)|+|L(C^*)|-|U(\wcomg)|-|L(\wcomg)|+\beta\cdot|L(\wcomg)|$, we can also get $\alpha\beta-\alpha-\beta\le |E(C^*)|-|U(C^*)|-|L(C^*)|$.
\end{proof}

\begin{lemma}
\label{lemma:e2}
Given a connected subgraph $C^* \subseteq G$, if $\wcomg \subseteq C^*$, it must contain $\alpha$ vertices where each vertex $u$ of them has $deg(u, C^*) \geq \beta$, and it must contain $\beta$ vertices where each vertex $v$ of them has $deg(v, C^*) \geq \alpha$. In addition, the query vertex should be one of these vertices.
\end{lemma}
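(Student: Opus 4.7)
The plan is to derive the lemma by first establishing the necessary vertex-count and degree properties inside $\wcomg$ itself, and then lifting those degrees up to $C^*$ using the containment $\wcomg \subseteq C^*$. The whole argument is essentially a bipartite double-counting on the cohesiveness constraint, so no new machinery beyond Definition~\ref{def:wcom} is needed.

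First I would unpack the cohesiveness constraint: every $u \in U(\wcomg)$ has $\deg(u,\wcomg) \ge \alpha$ neighbors, all of which lie in $L(\wcomg)$, and symmetrically for $v \in L(\wcomg)$. From this, a single edge of $\wcomg$ forces $L(\wcomg)$ to contain at least $\alpha$ distinct vertices (the neighbors of any one $u \in U(\wcomg)$) and $U(\wcomg)$ to contain at least $\beta$ distinct vertices. So I can conclude $|L(\wcomg)| \ge \alpha$ and $|U(\wcomg)| \ge \beta$, together with the fact that every $v \in L(\wcomg)$ satisfies $\deg(v,\wcomg) \ge \beta$ and every $u \in U(\wcomg)$ satisfies $\deg(u,\wcomg) \ge \alpha$.

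Next I would use monotonicity of degree under subgraph containment: since $\wcomg \subseteq C^*$, for every vertex $x \in V(\wcomg)$ we have $\deg(x,C^*) \ge \deg(x,\wcomg)$. Applying this to the $\ge \alpha$ vertices of $L(\wcomg)$ gives $\alpha$ distinct vertices of $C^*$ with degree in $C^*$ at least $\beta$; applying it to the $\ge \beta$ vertices of $U(\wcomg)$ gives $\beta$ distinct vertices of $C^*$ with degree in $C^*$ at least $\alpha$. For the last clause, I would invoke the connectivity constraint which forces $q \in V(\wcomg)$: if $q \in U(\wcomg)$ then $\deg(q,C^*) \ge \alpha$ and $q$ sits inside the second group, while if $q \in L(\wcomg)$ then $\deg(q,C^*) \ge \beta$ and $q$ sits inside the first group.

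I do not expect a genuine obstacle here; the argument is purely definitional. The only thing worth being careful about is orienting the inequalities correctly, i.e. remembering that the $\alpha$ degree constraint on $U$-vertices forces at least $\alpha$ vertices on the $L$ side (not the $U$ side) and vice versa, so that the cardinalities ``$\alpha$ vertices with degree $\ge \beta$'' and ``$\beta$ vertices with degree $\ge \alpha$'' are matched to the correct layer of the bipartition. Aside from this bookkeeping, the proof should fit in a few lines.
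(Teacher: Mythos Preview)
Your proposal is correct and is essentially the same approach as the paper's, which simply states that the lemma follows directly from Definition~\ref{def:wcom}. You have spelled out in full the double-counting and degree-monotonicity steps that the paper leaves implicit, but there is no substantive difference in strategy.
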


\begin{proof}
\label{proof:e2}
This lemma directly follows from Definition \ref{def:wcom}.
\end{proof}

Based on the above lemmas, we can skip checking the existence of $\wcomg$ if the constraints are not satisfied. It is still costly if we check each $C^*$ satisfies the constraints since we need to perform the peeling algorithm on $C^*$ using $O(\size(C^*))$ time. To mitigate this issue, we set an expansion parameter $\epsilon > 1$ to control the number of checks. Firstly, we check $C^*$ when it first satisfies the constraints in the Lemma \ref{lemma:e1} and Lemma \ref{lemma:e2}. After that, we only check $C^*$ if its size is at least $\epsilon$ times than the size of its last check. Here we choose $\epsilon = 2$ and the reasons are as follows. Suppose for each $C^*_i$ ($i \in [1, d]$, $d$ is the total number of checks) which needs to be checked, $\size(C^*_i)$ is exactly $\epsilon$ times of $\size(C^*_{i-1})$. Since we can find $\wcomg$ in the final check, we have $\size(C^*_d) < \epsilon(\size(\wcomg))$. The time complexity of using the peeling algorithm to check all these connected subgraphs is $O(\sigma_{i=1}^{d}\size(C^*_i))$, and $\sigma_{i=1}^{d}\size(C^*_i) $ = $\size(C^d)+\frac{1}{\epsilon}\size(C^d)+\frac{1}{\epsilon^2}\size(C^d)+...+\frac{1}{\epsilon^{d}}\size(C^d)$, we can have $O(\sigma_{i=1}^{d}\size(C^*_i)) = O(\epsilon(\frac{1}{\epsilon-1})\size(\wcomg))$. We choose $\epsilon = 2$ since $\frac{1}{\epsilon-1}$ achieves the smallest value at $\epsilon = 2$.




\noindent
{\bf The \expand Algorithm.} We present the \expand algorithm as shown in Algorithm \ref{algorithm:expand}.  Firstly, we retrieve $\abcorecgq$ based on the indexes proposed in Section \ref{sct:index}. {\color{black} We can return $\abcorecgq$ if all the edge weights are equal in $\abcorecgq$. Otherwise, we sort the edges in $\abcorecgq$ in non-increasing order by weights and we initialize $G^*$ and $C^*$ to empty. After that, we iteratively add each edge $(u,v)$ with the maximal weight (in $\abcorecgq$) to $G^*$ and remove the added edge from $\abcorecgq$.} Note that the size and edges of the connected subgraphs in $G^*$ will be maintained using the union-find structure. If $C^*$ is changed, we will check whether $C^*$ satisfies the constraints in the Lemma \ref{lemma:e1} and Lemma \ref{lemma:e2}. After that, we will check if its size grows at least $\epsilon$ times. If it is, we run the peeling process to check whether $\wcomg$ is contained by $C^*$. {\color{black}In this peeling process, we iteratively remove all the vertices without enough degree from $C^*$. If $q$ is not removed from $C^*$, we run Algorithm \ref{algorithm:peel} to obtain $\wcomg$.} The algorithm finishes if it finds $\wcomg$ in $C^*$.

\begin{algorithm}[htbp]  
    \small
    \DontPrintSemicolon
    \caption{\expand}  
    \label{algorithm:expand}
    \LinesNumbered
    \KwIn{$G$, $q$, $\alpha$, $\beta$, $\epsilon$;} 
    \KwOut{$\wcomg$} 
    $G^* \gets \emptyset$; $C^* \gets \emptyset$; $\kwnospace{pre\_size} = 0$;\\
    get $\abcorecgq$ from the index;\\
    {\color{black} sort edges of $\abcorecgq$ in non-increasing order by weights;}\\
    \While{$\abcorecgq$ is not empty}{
        {\color{black}$\kwnospace{w_{max}} \gets$ the maximal edge weight in $\abcorecgq$\\
        \ForEach{$(u,v) \in \abcorecgq$ with $w(u,v) = \kwnospace{w_{max}}$}{        
        remove $(u,v)$ from $\abcorecgq$;\\
        $G^*.add((u,v))$;\\
        maintain the connected subgraphs in $G^*$;\\
        }}
        \If{$C^*$ is not changed or violates constraints in Lemma \ref{lemma:e1} and Lemma \ref{lemma:e2}}{
            {\bf continue};\\
        }        
        \If{$\size(C^*) \geq \kwnospace{pre\_size} \cdot \epsilon$}{
            $\kwnospace{pre\_size} \gets \size(C^*)$;
        } \Else {
            {\bf continue};\\
        }
        {\color{black}Remove the vertices without enough degree from $C^*$;\\
        \If{$q \in C^*$}{
        run Algorithm \ref{algorithm:peel} lines 3 - 23, replace $\abcorecgq$ with a copy of $C^*$}}
    }
\end{algorithm}



\noindent
\begin{theorem}
\label{theorem:newco}
The \expand algorithm correctly solves the \wcom search problem.
\end{theorem}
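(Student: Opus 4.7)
The plan is to establish correctness by verifying three linked claims: (i) the main loop eventually triggers the embedded \peel subroutine on a connected subgraph $C^*$ that already contains every edge of $\wcomg$; (ii) once that happens, the returned subgraph is exactly $\wcomg$; and (iii) none of the skipping rules---the structural conditions of Lemma \ref{lemma:e1} and Lemma \ref{lemma:e2}, nor the $\epsilon$-doubling test on $\size(C^*)$---ever drop the iteration in which $\wcomg$ would first become retrievable. Termination is immediate, since the main loop processes a finite pool of edges from $\abcorecgq$ in non-increasing weight order; the special case when all weights of $\abcorecgq$ are equal is already dispatched by returning $\abcorecgq$ directly.

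For claim (i), I would exploit the monotonicity of $G^*$ and of the component $C^*$ through $q$: edges are added in non-increasing weight order and are never removed. Letting $w^* = f(\wcomg)$, by the time all edges of $\abcorecgq$ with weight at least $w^*$ have been inserted we must have $\wcomg \subseteq G^*$; since $\wcomg$ is connected and contains $q$, it is in fact contained in $C^*$, and this inclusion persists for every subsequent iteration. For claim (ii), the call on line 20 of \expand runs lines 3--23 of \peel with $C^*$ substituted for $\abcorecgq$. I would then invoke the argument of Theorem \ref{theorem:peelc}, which only uses that the input graph contains the sought community and that peeling proceeds by minimum weight: since $\wcomg \subseteq C^* \subseteq \abcorecgq$, the peeling returns a subgraph of $C^*$ satisfying the connectivity and cohesiveness constraints of Definition \ref{def:wcom} with weight at least $w^*$, and by the uniqueness argument of Lemma \ref{lemma:wcom} applied within $\abcorecgq$ this subgraph must coincide with $\wcomg$.

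For claim (iii), I would treat each pruning rule separately. Lemma \ref{lemma:e1} and Lemma \ref{lemma:e2} state necessary conditions for $\wcomg \subseteq C^*$, so a $C^*$ violating either cannot contain $\wcomg$ and skipping it is safe. The $\epsilon$-doubling rule is the delicate part: at iteration $i$ we may skip a $C^*_i$ that in fact already contains $\wcomg$. Monotone growth, however, implies $\wcomg \subseteq C^*_j$ for every $j \geq i$, so the answer is merely postponed to the next iteration at which $\size(C^*_j) \geq \epsilon \cdot \size(C^*_i)$; at that point $\wcomg$ is still present and, by claim (ii), recovered by the \peel call.

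The main obstacle I expect is formalizing claim (ii) cleanly: \peel was stated and proved with the \abcorecom as its working subgraph, and I would need a short argument that the same correctness proof applies when its working subgraph is any intermediate $C^*$ with $q \in C^*$ and $\wcomg \subseteq C^*$. The essence is that \peel never appeals to properties of $\abcorecgq$ beyond those of being a connected subgraph containing $\wcomg$ in which $q$ has sufficient degree; once that is stated carefully, the rest of the proof reduces to combining monotonicity of $G^*$ with the uniqueness of $\wcomg$ from Lemma \ref{lemma:wcom}.
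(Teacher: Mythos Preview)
Your proposal is correct and follows essentially the same approach as the paper: both arguments rest on (a) the monotone growth of $C^*$ under maximal-weight edge addition, so that eventually $\wcomg \subseteq C^*$, and (b) invoking the correctness of \peel (Theorem~\ref{theorem:peelc}) on that $C^*$. Your treatment is in fact more careful than the paper's brief proof, which does not explicitly address the safety of the $\epsilon$-doubling rule or the Lemma~\ref{lemma:e1}/\ref{lemma:e2} filters; your claim~(iii) fills exactly those gaps.
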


\begin{proof}
\label{proof:expandc}
According to Definition \ref{def:wcom}, $\wcomg$ is a subgraph of $\abcorecgq$. Since we always expand the edge with the maximal weight, the connected subgraph $C^*$ will always contain all the edges in $\abcorecgq$ which is connected to $q$ with weights $\geq f(C^*)$. According to Theorem \ref{theorem:peelc}, \peel can correctly check whether $\wcomg$ exists in $C^*$. Thus, this theorem holds.
\end{proof}

\noindent
{\bf Time complexity.} In \expand, retrieving $\abcorecgq$ based on the index needs $O(\size(\abcorecgq))$ time. Then, sorting the edges in $\abcorecgq$ needs $O(\sort(\abcorecgq))$ time. After that, the whole expansion process requires $O(\sum_{i=1}^{d}\size(C^*_i))$ time where $d$ is the number of subgraphs which survive to Algorithm \ref{algorithm:expand} line 16. In total, the time complexity of \expand is $O(\sort(\abcorecgq) + \sigma_{i=1}^{d}\size(C^*_i))$.

\noindent
{\bf Space complexity.} In the \expand algorithm, we need $O(\size(\abcorecgq))$ space to store the edges in $\abcorecgq$ except the space used by indexes.

{\color{black}
\noindent
{\bf Remark.} One may also consider using binary search over the weights to find $\wcomg$. To validate each weight, it still needs to run the peeling process which needs $O(m)$ time. In addition, if the result is found under a weight threshold, the algorithm stops and the search space does not need to be reduced anymore. Thus, this binary search method only needs to expand the search space which is similar to \expand. We implement the binary search approach and find its running time is similar to that of \expand (0.86$\times$-1.08$\times$) on all the datasets. Note that when the number of distinct weight values are small, \binary can have better performance than \expand.

}

\begin{table}[ht]
\small
\caption{\label{table:datasets}
Summary of Datasets}
\vspace*{-4mm}
\begin{center}
\scalebox{0.80}{
\begin{tabular}{c||c|c|c|c|c|c|c}
\noalign{\hrule height 1pt}
\textbf{Dataset} &$|E|$  & $|U|$ & $|L|$  & $\delta$ & $\alpha_{max}$ & $\beta_{max}$ & ${\color{black} |R_{\delta,\delta}|}$\\\hline
 BS & 433K & 77.8K & 186K & 13 & 8{,}524 & 707 &{\color{black} 13.6K}\\ \hline
 GH & 440K & 56.5K  & 121K & 39 & 884 &3{,}675&{\color{black} 21.5K}\\ \hline
 SO & 1.30M & 545K & 96.6K & 22 &4{,}917 & 6{,}119&{\color{black} 13.0K}\\ \hline
 LS & 4.41M & 992 & 1.08M & 164 & 55{,}559 & 773&{\color{black} 177K}\\ \hline
 DT& 5.74M & 1.62M & 383 & 73 & 378& 160{,}047&{\color{black} 30.5K}\\ \hline
 AR & 5.74M & 2.15M & 1.23M & 26 & 12{,}180 & 3{,}096&{\color{black} 36.6K}\\ \hline
 PA & 8.65M & 1.43M & 4.00M & 10 & 951 & 119&{\color{black} 639}\\ \hline
 ML & 25.0M & 162K & 59.0K & 636 & 32{,}202 & 81{,}491&{\color{black} 2.12M}\\ \hline
 DUI & 102M & 833K & 33.8M & 183 & 24{,}152 & 29{,}240&{\color{black} 2.30M}\\ \hline
 EN & 122M  & 3.82M & 21.5M  & 254& 1{,}916{,}898 & 62{,}330&{\color{black} 1.03M}\\ \hline
DTI& 137M & 4.51M & 33.8M & 180& 1{,}057{,}753 & 6{,}382&{\color{black} 242K}\\
\noalign{\hrule height 1pt}
\end{tabular}}
\end{center}
\end{table}

\section{Experiments}
\label{sct:experiment}
In this section, we first evaluate the effectiveness of the \wcom model. Then, we evaluate the efficiency of the techniques for retrieving \abcorecoms and \wcoms. 

\subsection{Experiments setting}
\noindent
{\bf Algorithms.}
Our empirical studies are conducted against the following designs:

\noindent
{\em $\bullet$ Techniques to retrieve the \abcorecom.}
The query algorithms: 1) the online query algorithm $Q_o$ in \cite{ding2017efficient}, and the query algorithms based on the following indexes: 2) $Q_v$ based on the \abindex $\indexbc$ proposed in \cite{liu2019}, 3) $Q_{opt}$ based on the degeneracy-bounded index $\indexad$ in Section \ref{sct:delta_index}. The indexes: 1) the \abindex $\indexbc$, 2) basic indexes $\indexbsa$ and $\indexbsb$, 3) $\indexad$.

\noindent
{\em $\bullet$ Algorithms to retrieve the \wcom.} 1) the peeling algorithm \peel, 2) the expansion algorithm \expand in Section \ref{sct:query} and 3) {\color{black} a baseline algorithm \baseline which iteratively expands the edges (with larger weight value) from the connected component containing $q$ of the whole graph rather than from $\abcorecgq$.}

The algorithms are implemented in C++ and the experiments are run on a Linux server with Intel Xeon 2650 v3 2.3GHz processor and 768GB main memory. {\em We terminate an algorithm if the running time is more than $10^4$ seconds}.

\noindent
{\bf Datasets.}
We use $11$ real datasets in our experiments which are \texttt{Bookcrossing} (BC), \texttt{Github} (GH), \texttt{StackOverflow} (SO), \texttt{Lastfm} (LS), \texttt{Discogs} (DT), \texttt{Amazon} (AR), \texttt{DBLP} (PA), \texttt{MovieLens} (ML), \texttt{Delicious-ui} (DUI), \texttt{Wikipedia-en} (EN) and \texttt{Delicious-ti} (DTI). All the datasets we use can be found in KONECT ({http://konect.uni-koblenz.de}). {\color{black}Note that, for the datasets without weights (i.e., DT and PA), we choose the random walk with restart model \cite{tong2006fast} to compute the node relevance and generate the weights. Here several other models \cite{chen2020structsim,jeh2002simrank} can also be applied.} 

The summary of datasets is shown in Table \ref{table:datasets}. $U$ and $L$ are vertex layers, $|E|$ is the number of edges. $\delta$ is the degeneracy. $\alpha_{max}$ and $\beta_{max}$ are the largest value of $\alpha$ and $\beta$ where a $(\alpha,1)$-core or $(1,\beta)$-core exists, respectively. {\color{black}$ |R_{\delta,\delta}|$ denotes the number of edges in $R_{\delta,\delta}$ in each dataset.} In addition, $M$ denotes $10^6$ and $K$ denotes $10^3$. 


\subsection{Effectiveness evaluation}

{\color{black}
In this section, we evaluate the effectiveness of our model on \texttt{MovieLense} which contains 25M ratings (ranging from 1 to 5) from 162K users ($U$) on 59K movies ($L$). 

We compare the \wcom model with the $(\alpha, \beta)$-core, $k$-bitruss (setting $k=\alpha \cdot \beta$) \cite{zou2016bitruss} and maximal biclique \cite{zhang2014finding} models. We also add a community $C_{4\star}$ which is the induced subgraph of all the movies with average ratings at least 4. Note that, we use the connected components of the query vertex as the result when considering different models. 

\vspace*{-2mm}
\begin{figure}[htb]
\begin{centering}
\includegraphics[trim=-130 0 0 0,clip,width=0.43\textwidth]{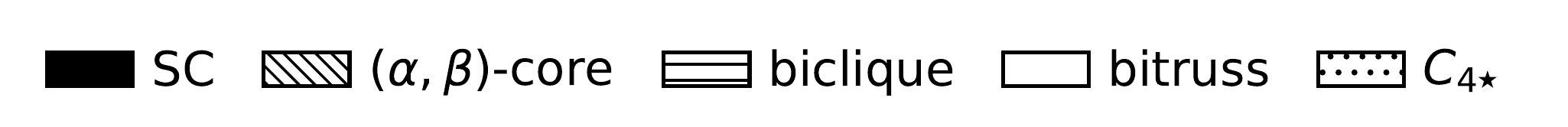}
\vspace*{-5mm}
\end{centering}
\end{figure}

\begin{figure}[htb]
\begin{centering}
\subfigure[Bipartite graph density]{
\begin{minipage}[b]{0.22\textwidth}
\includegraphics[trim=0 0 0 0,clip,width=1\textwidth]{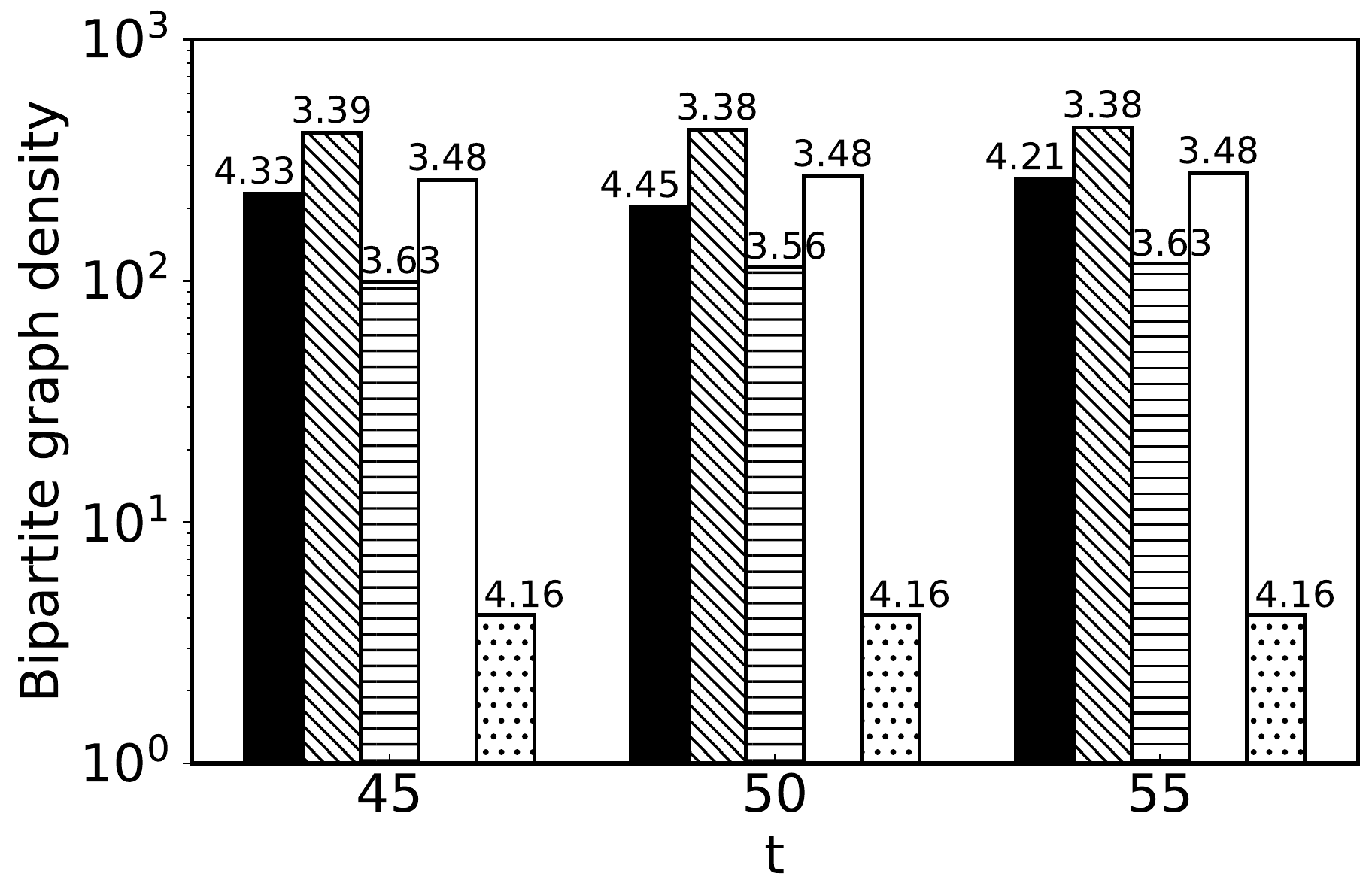}\vspace{-5.5mm}
\label{fig:effectiveness1}
\end{minipage}}
\subfigure[Percentage of dislike users]{
\begin{minipage}[b]{0.22\textwidth}
\includegraphics[trim=0 0 0 0,clip,width=1\textwidth]{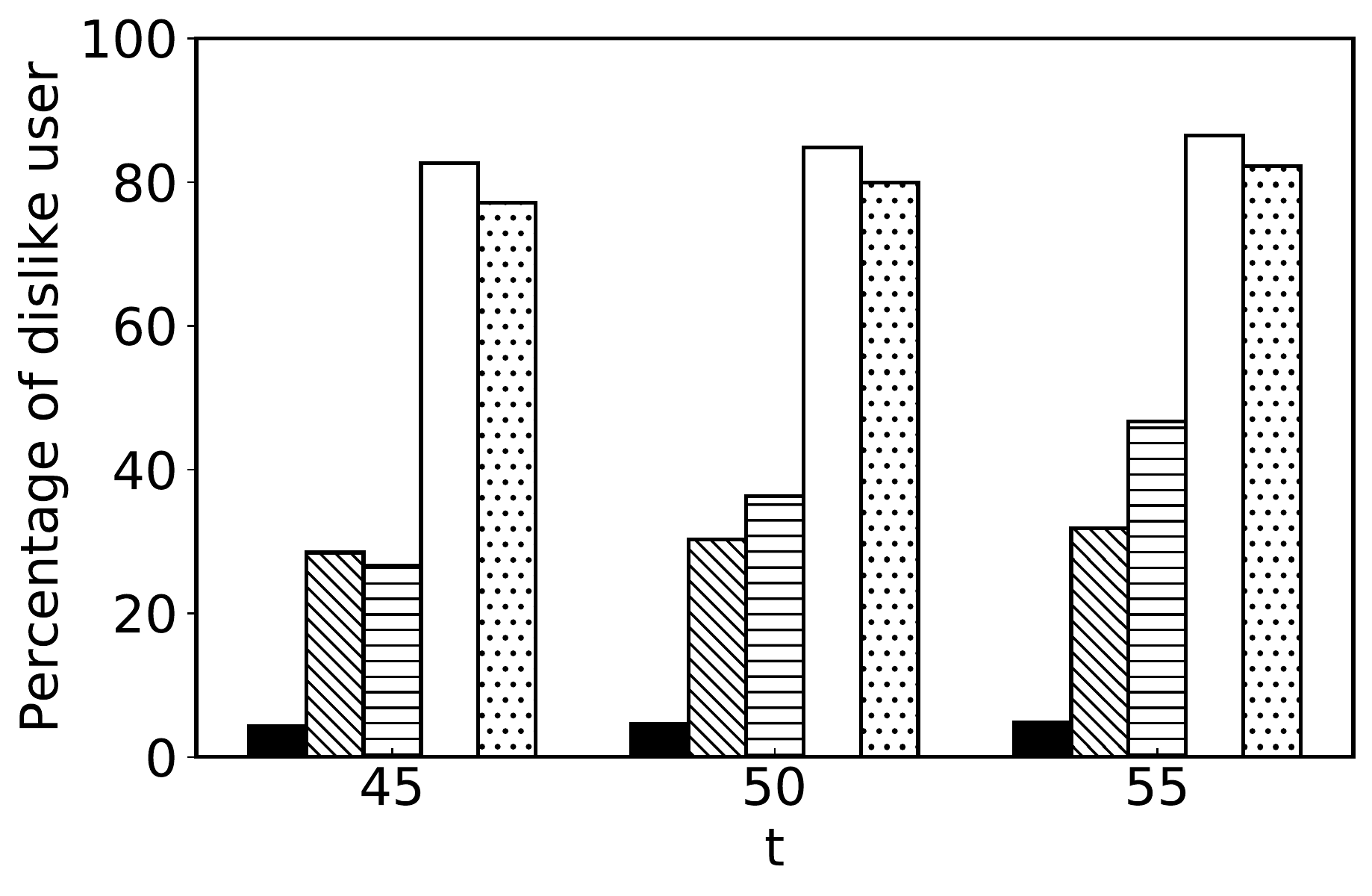}\vspace{-5.5mm}
\label{fig:effectiveness2}
\end{minipage}}
\vspace{-3mm}
\caption{{\color{black}Evaluating the community quality, varying $\alpha,\beta=t$}}
\label{fig:effectiveness}
\end{centering}
\end{figure}

\noindent
{\bf Evaluating the community quality.} Suppose a user wants to find some friends who are also fans of comedy movies. We extract the subgraph formed by the ratings on comedy movies and perform community search algorithms. Figure \ref{fig:effectiveness1} shows the bipartite graph density  which is computed as $d(G)={|E(G)|/}{\sqrt{|U(G)||L(G)|}}$ \cite{kannan1999analyzing}. We can see that the communities produced by \abcore, bitruss, biclique and SC all have high densities comparing with $C_{4\star}$ since the structure cohesiveness is considered in these models. Thus, the users in $C_{4\star}$ are loosely connected with each other and have fewer interactions. In addition, the average ratings (i.e., the numbers on the top of each bar) indicate that SC can always return a group of users with higher average ratings than \abcore, bitruss and biclique. We also show the number of dislike users in Figure \ref{fig:effectiveness2}. A user is a dislike user if he/she gives fewer than $0.6\alpha$ good ratings (i.e., rating $\geq 4$), who is not likely to be a fan of comedies. We can see that SC contains fewer number of dislike users comparing with all the other models because both weight and structure cohesiveness are considered. Thus, the users in SC are considered as good candidates to be recommended to the query user. Note that the percentage of dislike users in bitruss and $C_{4\star}$ is very high. This is because bitruss ensures the structure cohesiveness using the butterfly (i.e., $2\times2$-biclique) and a user can exist in a $k$-bitruss with a large $k$ value if he/she only watched a few number of hot movies. In addition, $C_{4\star}$ does not ensure the structure cohesiveness and there exist many users who only watched few high rating movies. 

\begin{figure}[htb]
\begin{centering}
\includegraphics[trim=0 0 0 0,clip,width=0.42\textwidth]{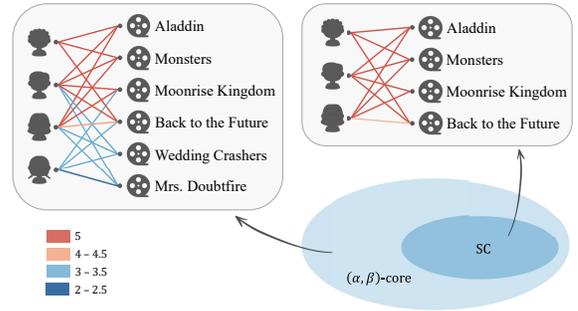}
\caption{{\color{black}Representative components of real-life communities}}
\label{fig:case}
\vspace*{-2mm}
\end{centering}
\end{figure}

\begin{table}[ht]
\small
\caption{{\color{black}Statistics of query results, $q=6{,}778$}}
\vspace*{-3mm}
\begin{center}
\scalebox{0.79}{
\begin{tabular}{c||c|c|c|c|c|c}
\noalign{\hrule height 1pt}
\textbf{Models} & $|U|$ & $|M|$ & $R_{avg}$  & $R_{min}$ & $M_{avg}$ & $Sim$ ($\%$)\\\hline
 SC & 2{,}127 & 670 & 4.81 & 4.50 & 63.47 & 100 \\ \hline
 $(\alpha,\beta)$-core & 34{,}466 & 2491 & 3.39 & 0.5 & 110.03 &  7.57 \\ \hline
 bitruss & 158{,}183 & 2{,}985 & 3.48 & 0.5 & 35.87 & 1.74\\ \hline
 biclique & 65 & 45 & 3.45 & 0.5 & 45 & 2.39\\ \hline
 $C_{4\star}$ & 114{,}915 & 387 & 4.16 & 0.5& 2.39 & 1.82\\
\noalign{\hrule height 1pt}
\end{tabular}}
\end{center}
\label{table:case}
\end{table}

\noindent
{\bf Case study.} We conduct queries using parameters $q=6778, \alpha=45, \beta=45$ on comedy movies. The statistics of query results are shown in Table \ref{table:case}. $|U|$ and $|M|$ denote the total number of users and movies in the community, respectively. $R_{avg}$ and $R_{min}$ denote the average and minimal rating in the community, respectively. $M_{avg}$ is the average number of movies a user watched in the community and $Sim$ is the jaccard similarity between each community and SC. For the biclique model, here we use a maximal biclique containing $q$ with at least 45 vertices in each layer. We can see that SC contains reasonable number of users and vertices with higher average rating and minimal rating in the community than the others. We also show the representative components of the communities using \abcore and SC in Figure \ref{fig:case}. We can see that \abcore contains users who do not like such movies and movies that are not liked by such users. This is because \abcore only considers structure cohesiveness and ignores the edge weights. We can observe that $M_{avg}$ of $C_{4\star}$ is only 2.39 since the structure cohesiveness is not considered in $C_{4\star}$. Thus, $C_{4\star}$ contains many users who only watched a few number of high rating movies and these users are loosely connected with the query user. Among these models, only SC considers both weight and structure cohesiveness, which is not similar to other communities compared here. In SC, each user has given at least 45 times 4.5-star ratings on these comedy movies and the movies are reviewed as 4.5-star at least 45 times by the users. Thus, the quality of the users and movies found by SC can be guaranteed and highly recommended to the query user. 
}

\begin{figure}[!h]
\vspace*{-2mm}
\begin{centering}
\includegraphics[trim=0 0 0 0,clip,width=0.48\textwidth]{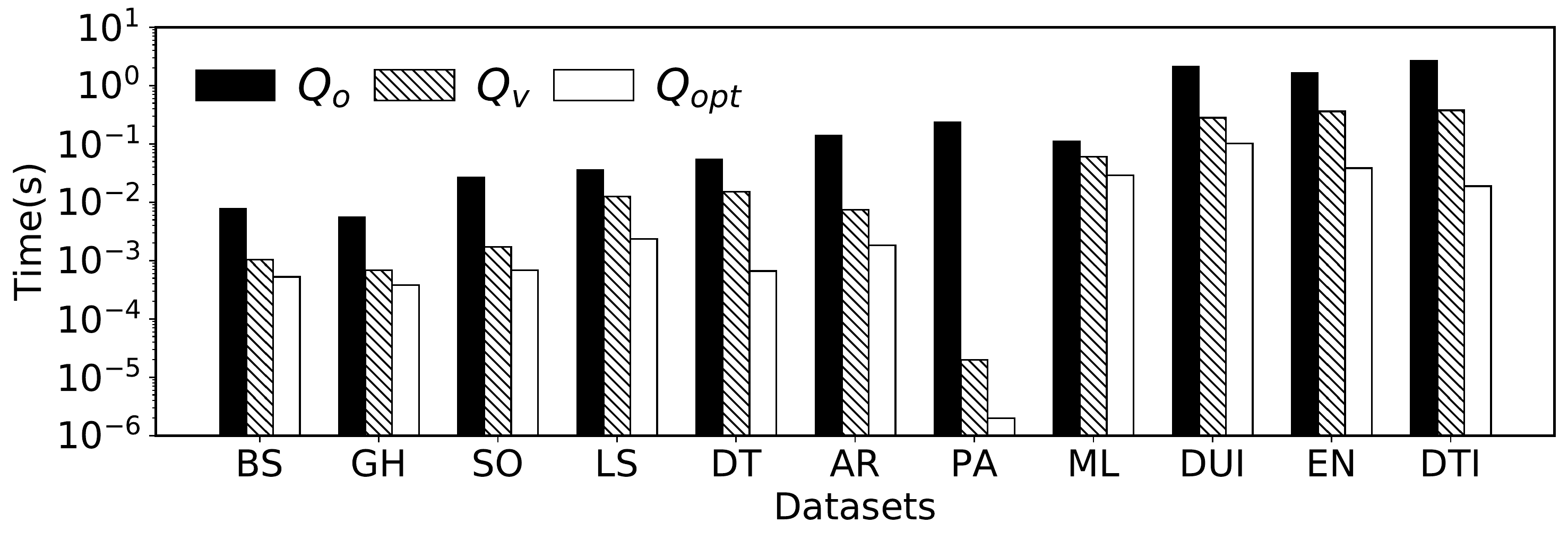}
\vspace*{-1mm}
\caption{Retrieving the \abcorecoms}
\label{fig:abcorec}
\end{centering}
\end{figure}


\begin{figure}[htb]
\begin{centering}
\subfigure[\texttt{EN}, $\alpha, \beta = c \cdot \delta$]{
\begin{minipage}[b]{0.2\textwidth}
\includegraphics[trim=0 0 0 0,clip,width=1\textwidth]{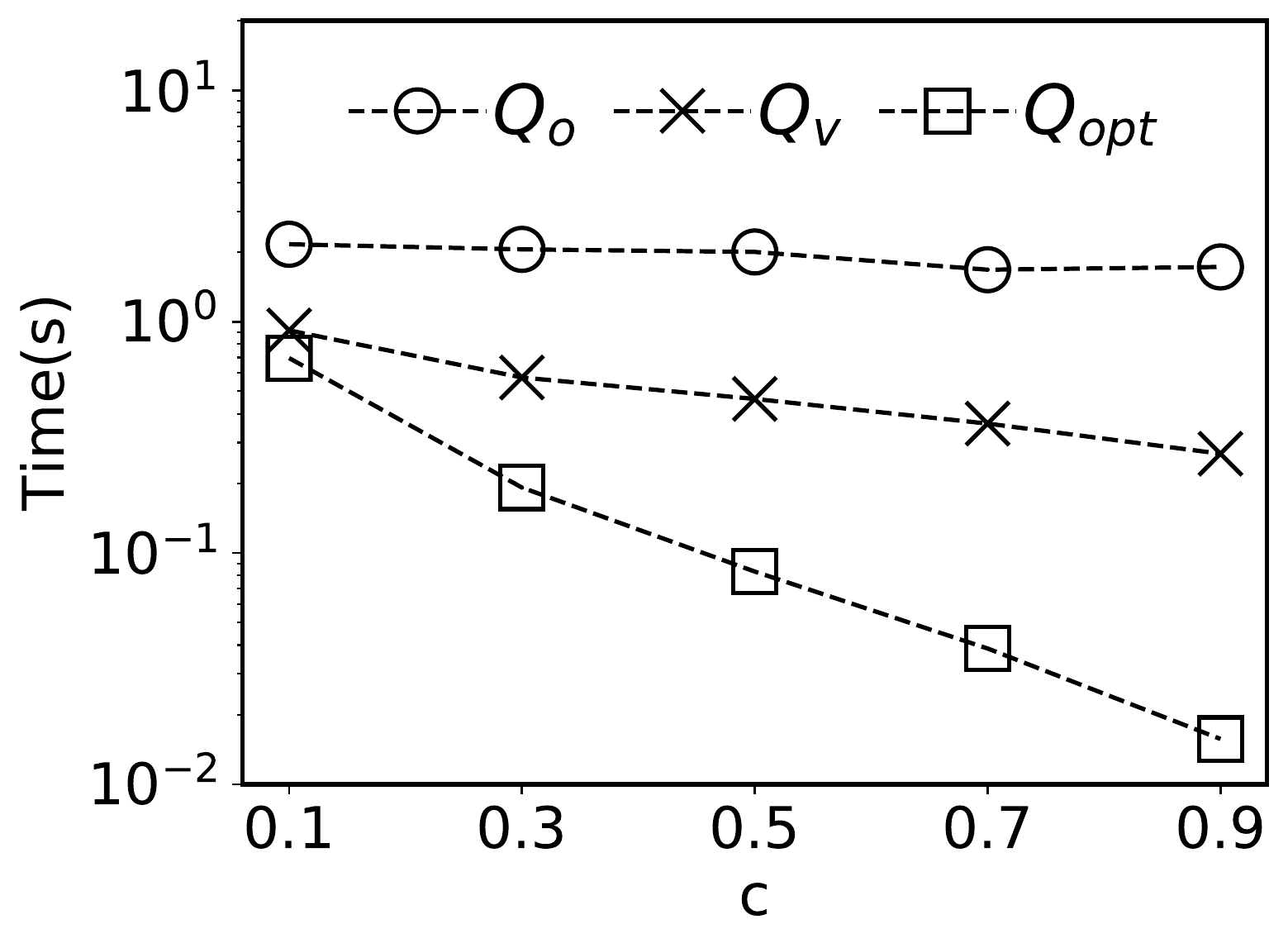}\vspace{-5.5mm}
\label{fig:p1}
\end{minipage}}
\subfigure[\texttt{SO}, $\alpha, \beta = c \cdot \delta$]{
\begin{minipage}[b]{0.2\textwidth}
\includegraphics[trim=0 0 0 0,clip,width=1\textwidth]{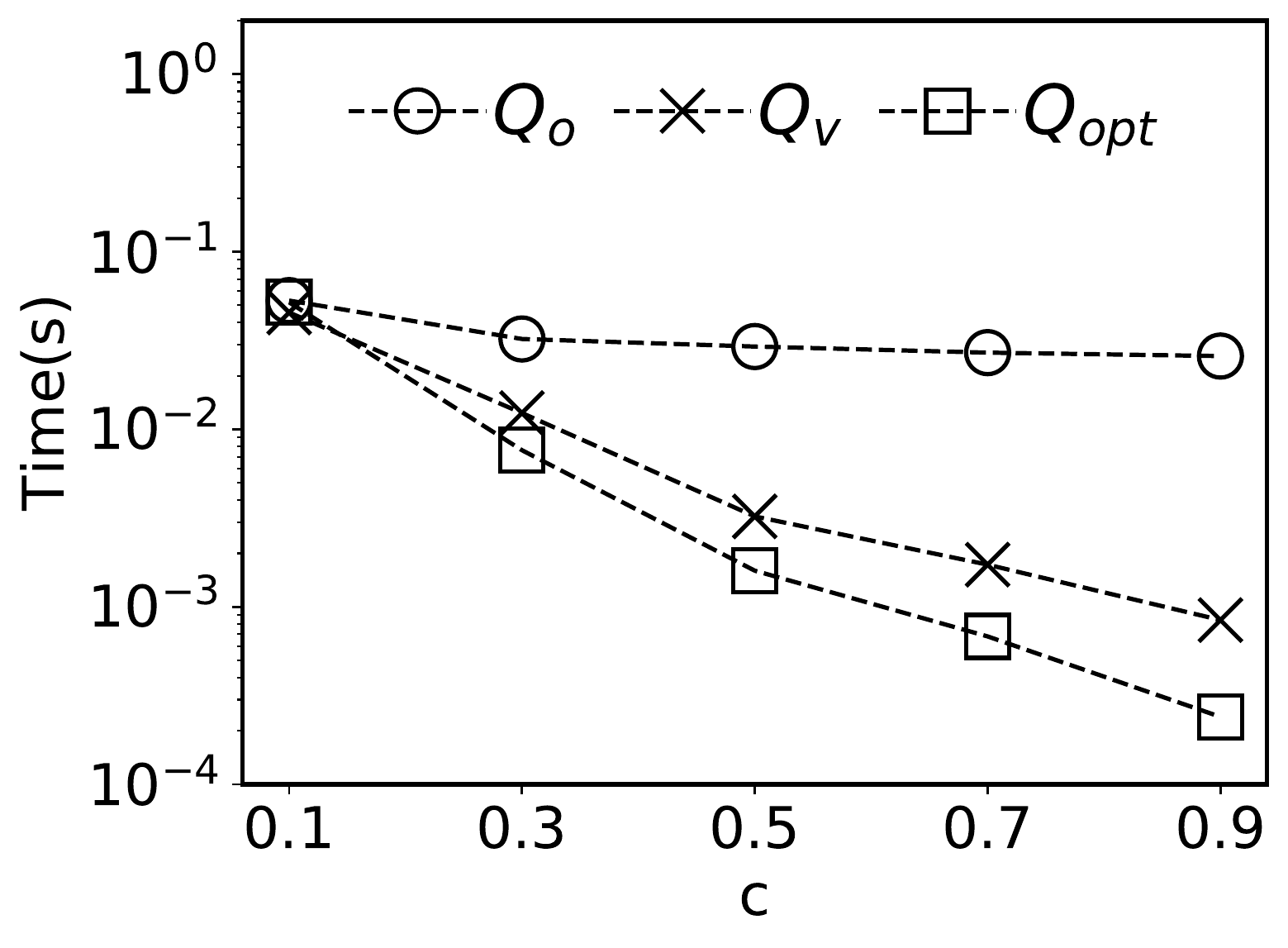}\vspace{-5.5mm}
\label{fig:p2}
\end{minipage}}
\subfigure[\texttt{EN}, $\alpha = 0.5 \cdot \delta$, $\beta= c \cdot \delta$]{
\begin{minipage}[b]{0.2\textwidth}
\includegraphics[trim=0 0 0 0,clip,width=1\textwidth]{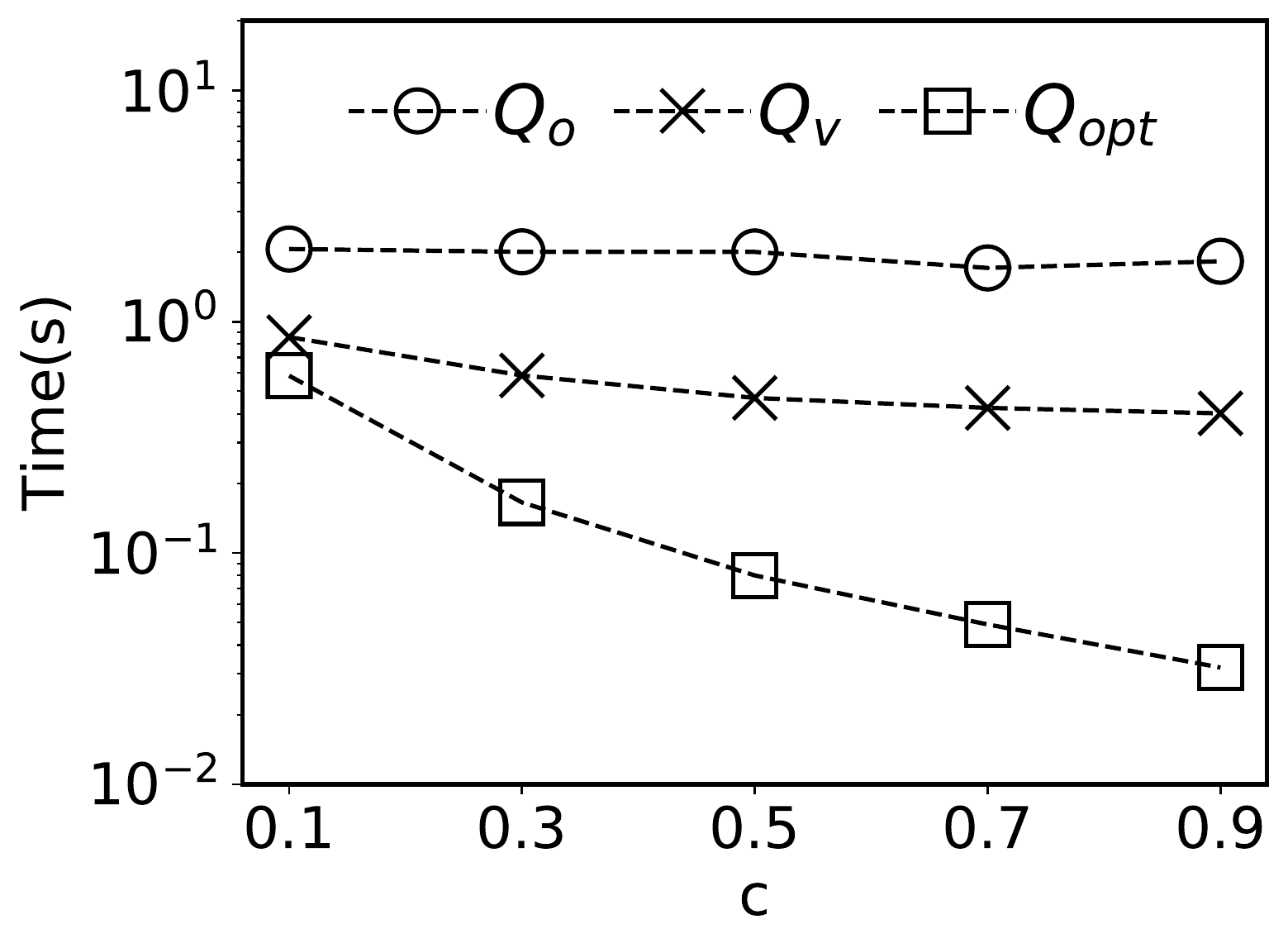}\vspace{-5.5mm}
\label{fig:p3}
\end{minipage}}         
\subfigure[\texttt{SO}, $\alpha= c \cdot \delta$, $\beta = 0.5 \cdot \delta$]{
\begin{minipage}[b]{0.2\textwidth}
\includegraphics[trim=0 0 0 0,clip,width=1\textwidth]{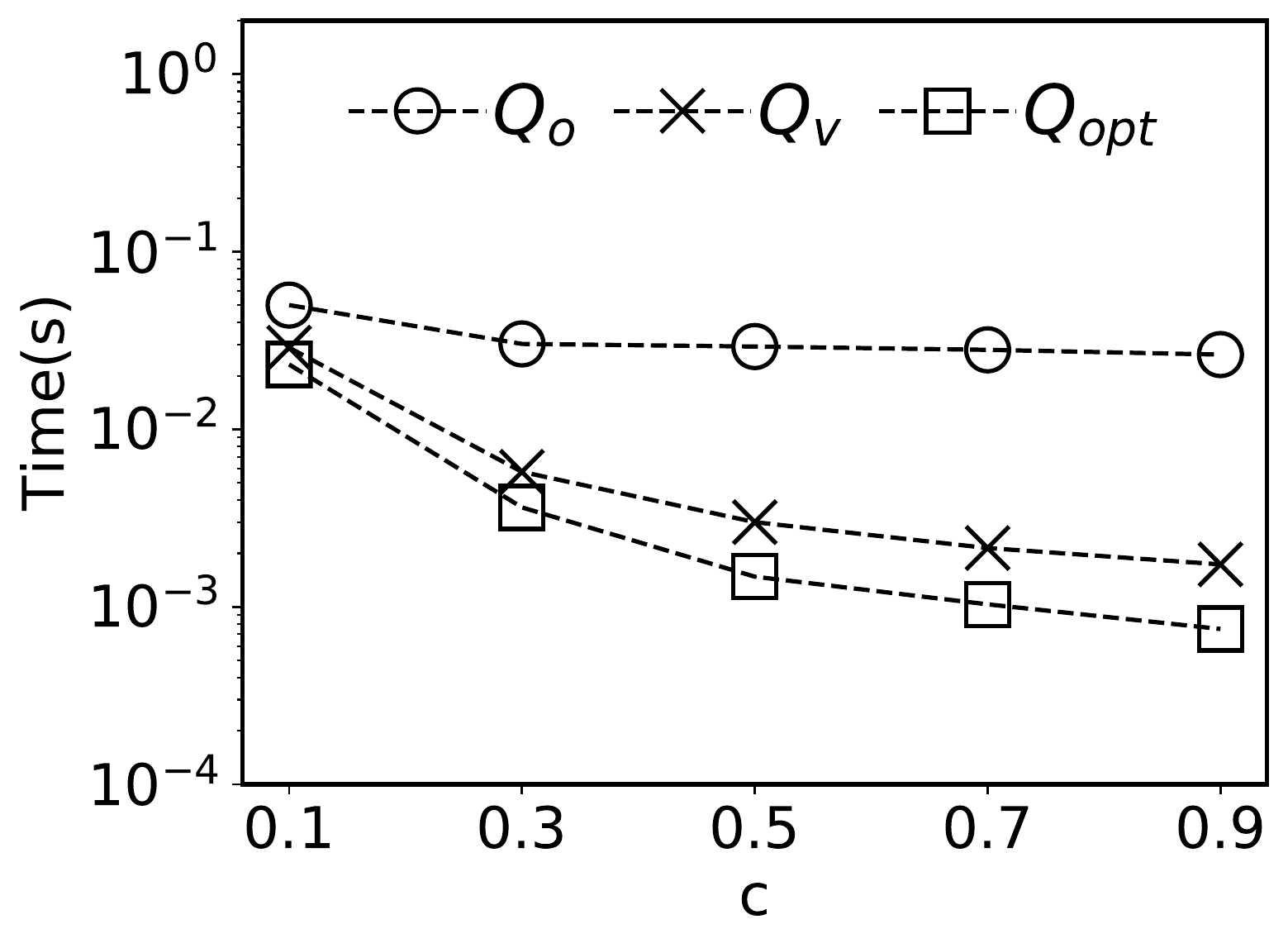}\vspace{-5.5mm}
\label{fig:p4}
\end{minipage}}
\vspace*{-3mm}\caption{{\color{black}Retrieving the \abcorecoms, varying $\alpha$ and $\beta$}}
\label{fig:vary_component_query}
\end{centering}
\end{figure}

\subsection{Evaluation of retrieving \abcorecom}
In this part, we evaluate the proposed indexing techniques to retrieve the \abcorecom.

\noindent
{\bf Query time.} {\em 1) Performance on all the datasets.} We first evaluate the performance on all the datasets by setting $\alpha$ and $\beta$ to $0.7\delta$. In Figure \ref{fig:abcorec}, we can observe that $Q_{opt}$ significantly outperforms $Q_{o}$ and $Q_{v}$ on all the datasets. This is because $Q_{opt}$ is based on $\indexad$ which can achieve optimal retrieval of \abcorecoms. Especially, on large datasets such as \texttt{DUI}, \texttt{EN} and \texttt{DTI}, the $Q_{opt}$ algorithm is one to two orders of magnitude faster than $Q_{o}$ and is up to 20$\times$ faster than $Q_{v}$.


\noindent
{\em 2) Varying $\alpha$ and $\beta$.} We also vary $\alpha$ and $\beta$ to assess the performance of these algorithms. {\color{black}In Figure \ref{fig:vary_component_query}(a) and (b), $\alpha$ and $\beta$ are varied simultaneously. We can observe that when $\alpha$ and $\beta$ are small, the performance of these algorithms is similar. This is because only a few number of edges are removed from the original graph when the query parameters are small. When $\alpha$ and $\beta$ are large, the resulting \abcorecoms are much smaller than the original graph. Thus, $Q_{opt}$ is much faster than $Q_{o}$ and $Q_{v}$. In Figure \ref{fig:vary_component_query}(c) and (d), we fix $\alpha$ (or $\beta$) and vary the other one and the trends are similar.}

\begin{figure}[!h]
\begin{centering}
\includegraphics[trim=0 0 0 0,clip,width=0.49\textwidth]{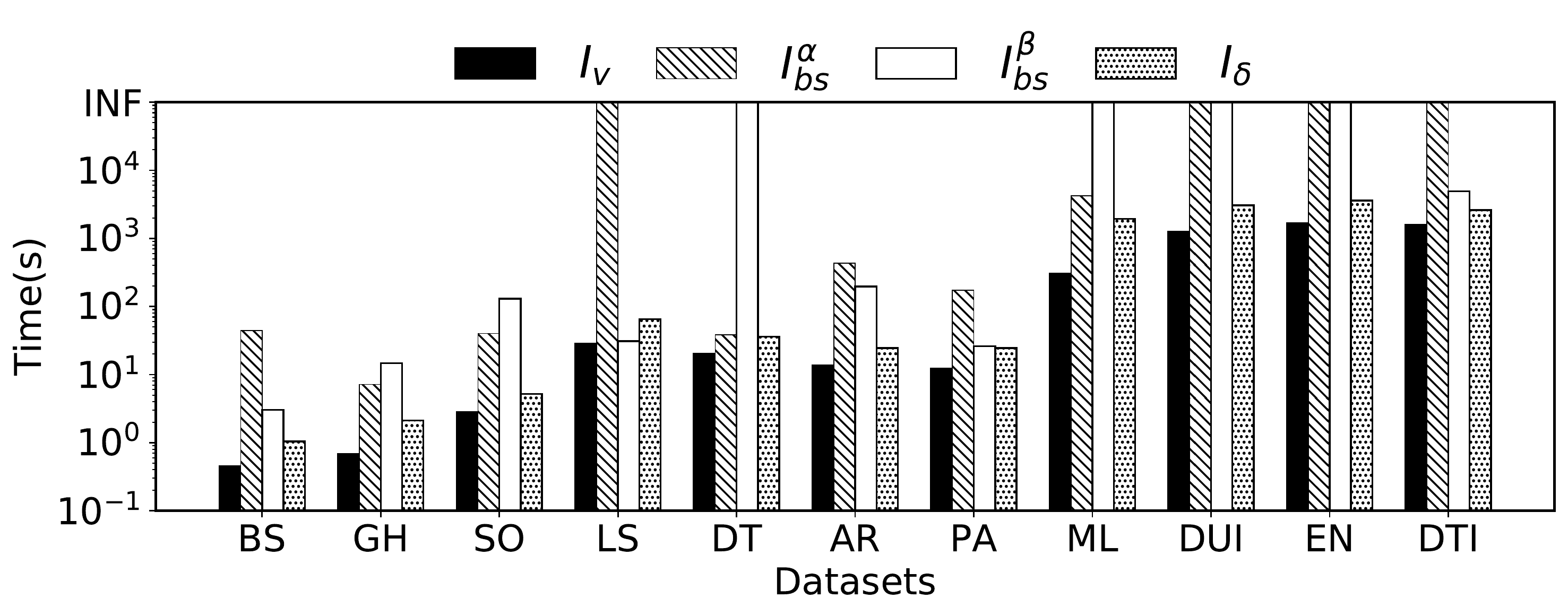}
\vspace{-6mm}
\caption{Index construction time}
\label{fig:index_time}
\end{centering}
\end{figure}

\begin{figure}[!h]
\begin{centering}
\includegraphics[trim=0 0 0 0,clip,width=0.49\textwidth]{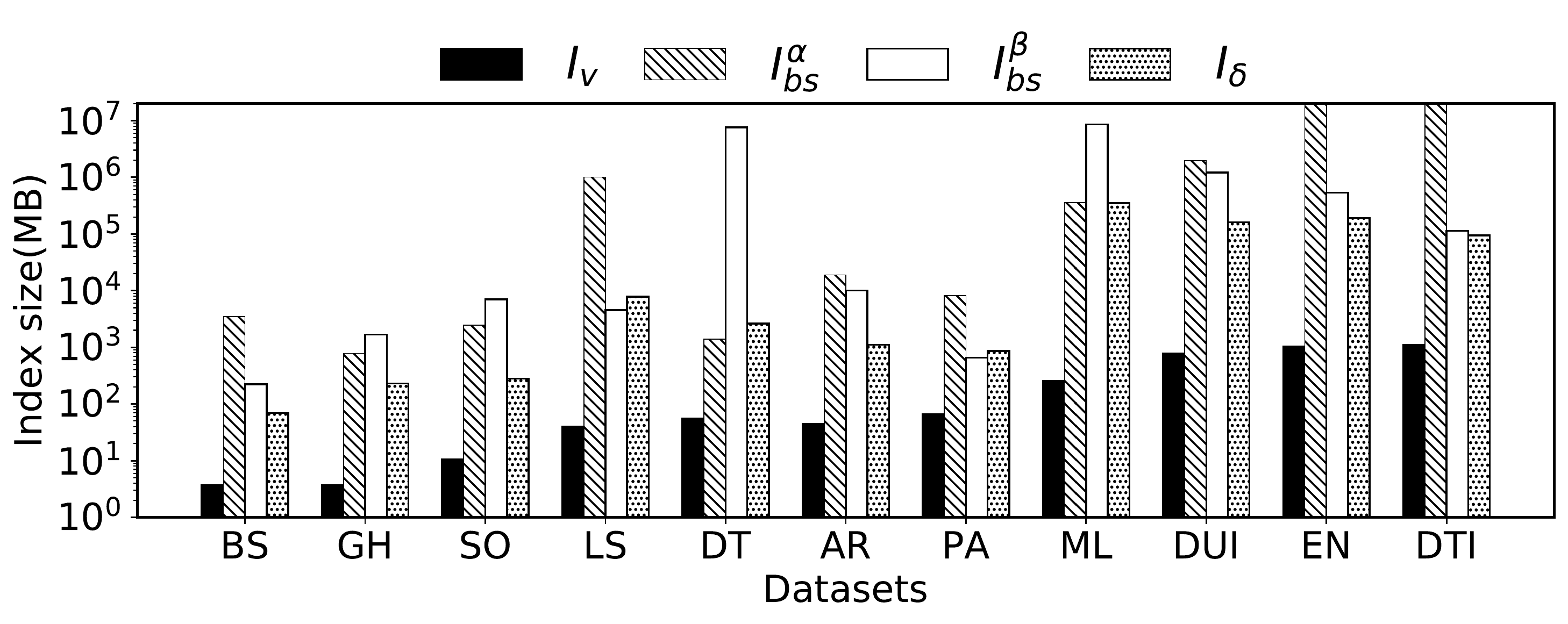}
\vspace{-6mm}
\caption{Index size}
\label{fig:index_size}
\end{centering}
\end{figure}

\noindent
{\bf Evaluating index construction time and index size.} In this part, we evaluate the index size and index construction time. 

\noindent
{\em 1) Index construction time.} In Figure \ref{fig:index_time}, we can see that $\indexad$ can be efficiently constructed on all the datasets since it only needs the same low constructing time complexity as $\indexbc$ ($O(\delta m)$). In addition, constructing $\indexad$ is slightly slower than constructing $\indexbc$ which is reasonable since $\indexbc$ only contains vertex information of \abcores while $\indexad$ contains edge information which can support optimal retrieval of \abcorecoms. The time for constructing $\indexbsa$ and $\indexbsb$ highly depends on $\alpha_{max}$ and $\beta_{max}$. Thus, it is very slow (or even unaccomplished) on the datasets where these two values are large such as \texttt{DUI} and \texttt{EN}.

\noindent
{\em 2) Index size.} In Figure \ref{fig:index_size}, we evaluate the size of these indexes. If an index cannot be built within the time limit, we report the expected size of it. We can see that $\size(\indexad)$ is smaller than $\size(\indexbsa)$ and $\size(\indexbsb)$ on almost all the datasets. $\indexbc$ is the index with the minimal size since it only contains vertex information.



\subsection{Evaluation of retrieving \wcom}
Here we evaluate the performance of the algorithms (\baseline, \peel, and \expand) for querying \wcoms. In these algorithms, we use $Q_{opt}$ to support the optimal retrieval of \abcorecom. In each test, we randomly select 100 queries and take the average.

\begin{figure}[!h]
\begin{centering}
\includegraphics[trim=0 0 0 0,clip,width=0.49\textwidth]{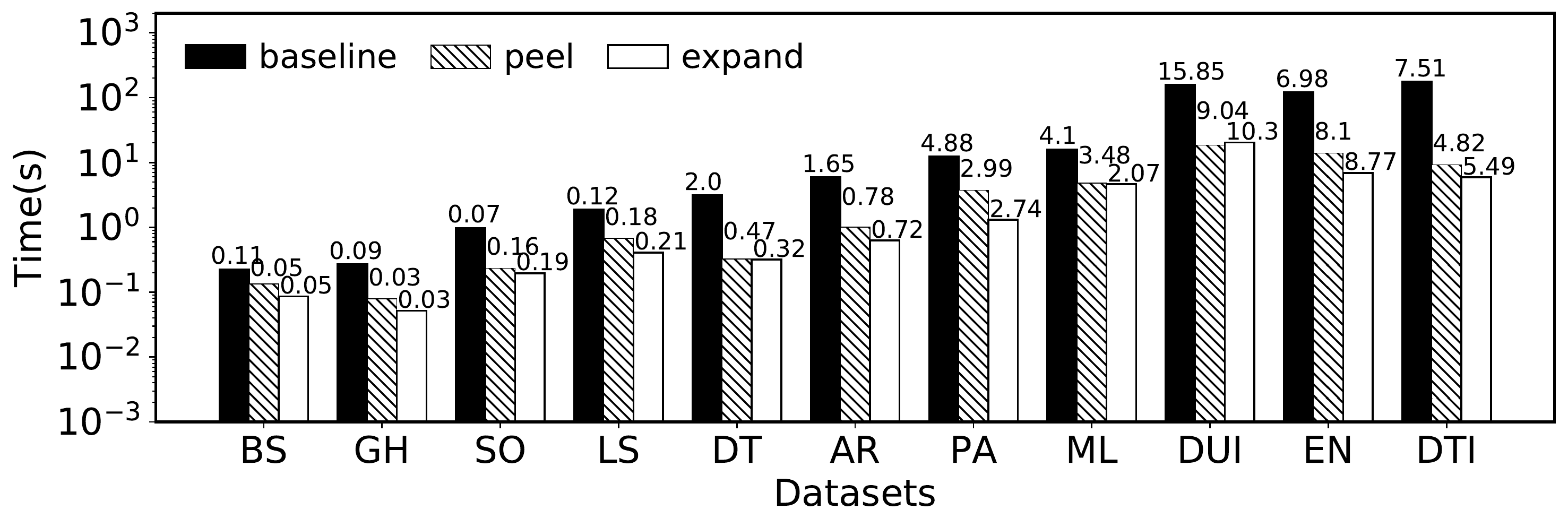}
\vspace*{-5mm}
\caption{{\color{black}Query performance on different datasets}}
\label{fig:wcom}
\end{centering}
\end{figure}

{\color{black}
\noindent
{\bf Evaluating the performance on all the datasets.}
In Figure \ref{fig:wcom}, we evaluate the performance of \baseline, \peel, and \expand on all the datasets. We also report the standard deviation on the top of each bar. We can see that \expand and \peel are significantly faster than \baseline, especially on large datasets. This is because, with the help of the two-step framework, the search space of \peel and \expand is limited in $\abcorecgq$, while \baseline needs to consider all edges in the connected component containing $q$ of the whole graph. We can also see in Table \ref{table:datasets} that $|R_{\delta,\delta}|$ is much smaller than $|E|$. Since $C_{\delta,\delta}(q) \subseteq R_{\delta,\delta}$, when we choose relatively larger parameters, the search space of \peel and \expand is much smaller than \baseline. In addition, we can see that on most datasets, \expand is on average more efficient than \peel. However, the standard deviations of \expand and \peel are large. This is because \peel and \expand both need more time to handle the cases when $\alpha$ and $\beta$ are small and \expand is usually much faster than \peel.}

\begin{figure}[htb]
\begin{centering}
\subfigure[\texttt{DT}, $\alpha, \beta = c \cdot \delta$]{
\begin{minipage}[b]{0.203\textwidth}
\includegraphics[trim=0 0 0 0,clip,width=1\textwidth]{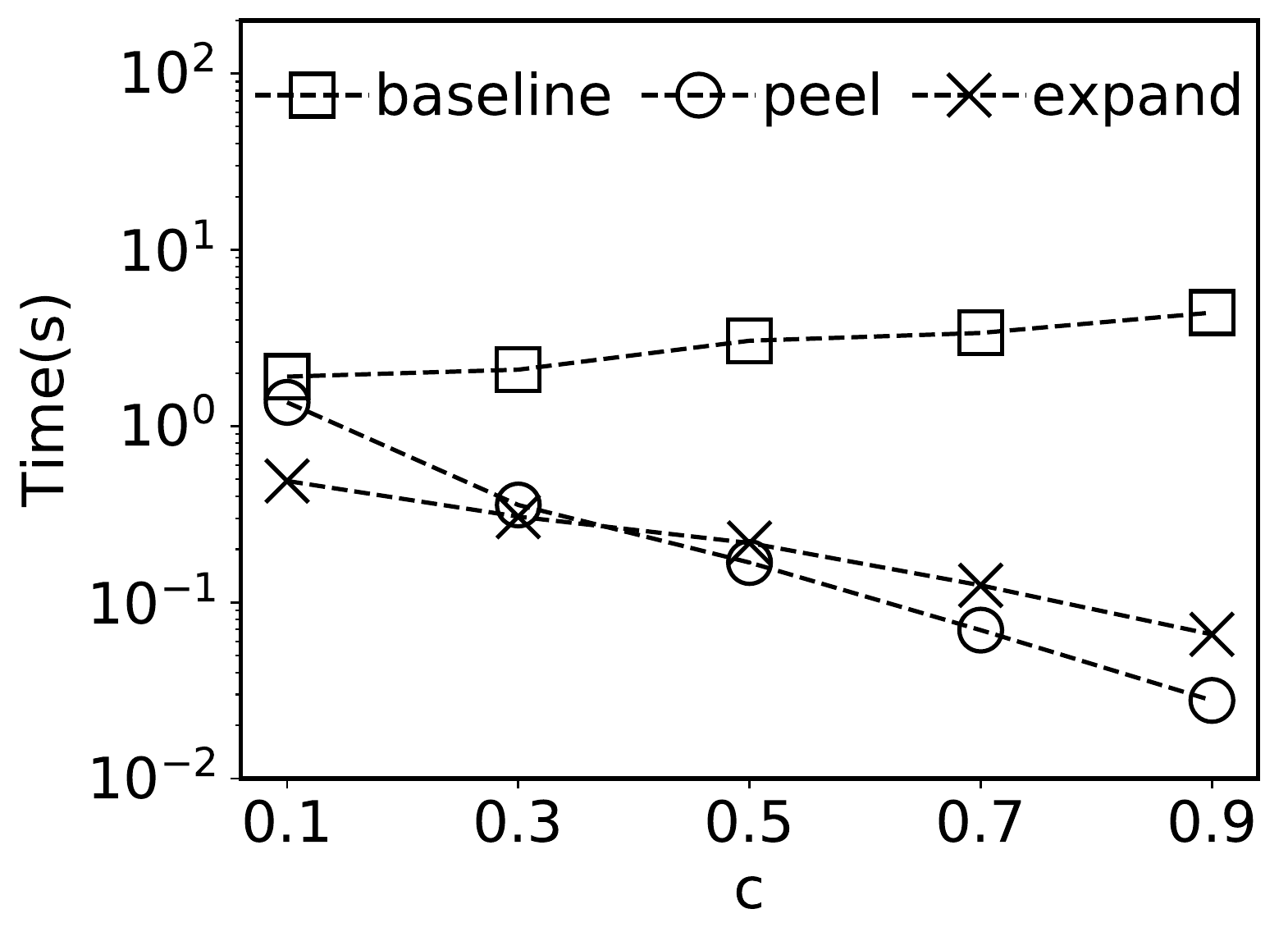}\vspace{-5.5mm}
\label{fig:p11}
\end{minipage}}
\subfigure[\texttt{ML}, $\alpha, \beta = c \cdot \delta$]{
\begin{minipage}[b]{0.2\textwidth}
\includegraphics[trim=0 0 0 0,clip,width=1\textwidth]{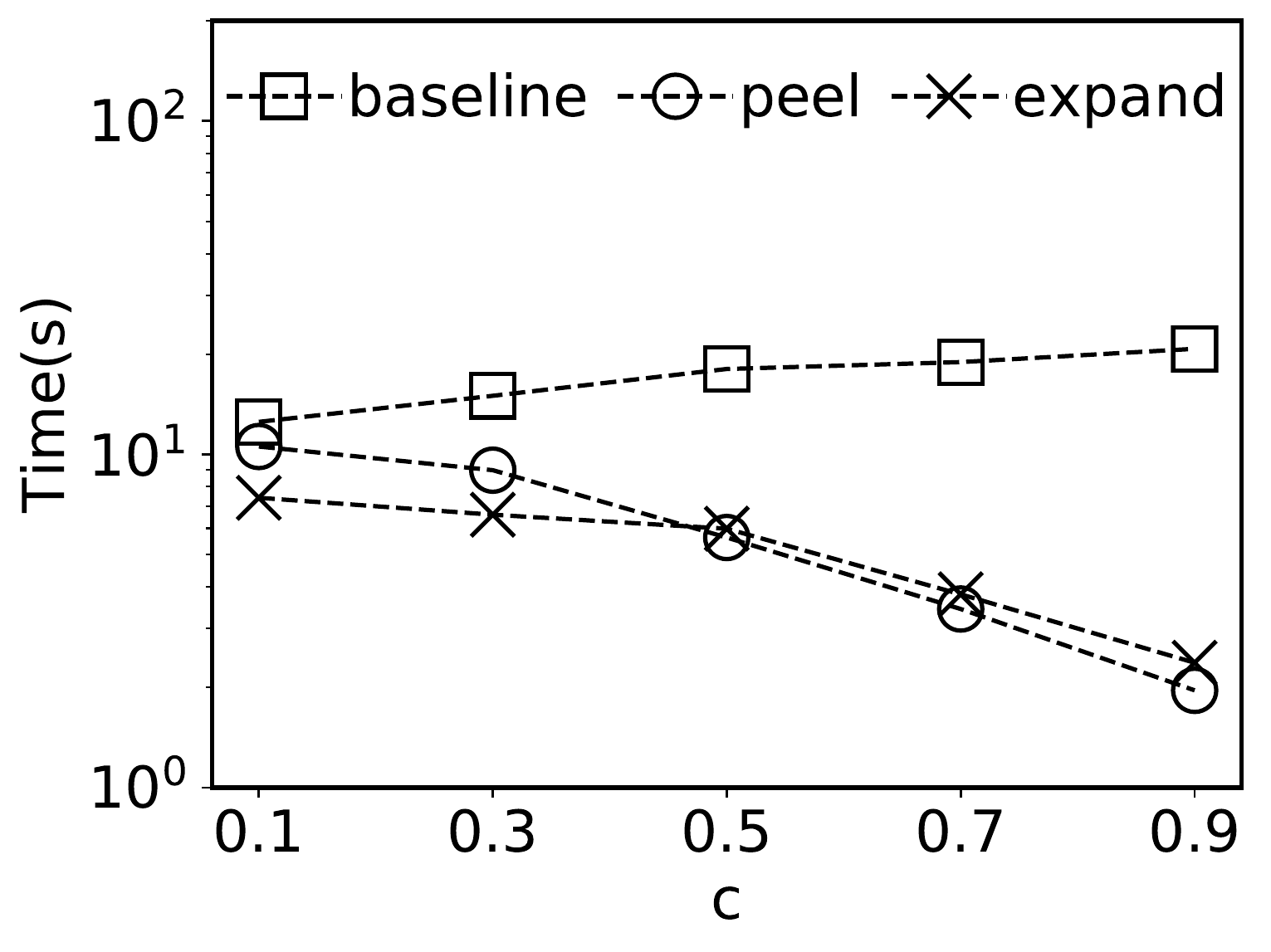}\vspace{-5.5mm}
\label{fig:p13}
\end{minipage}}
\subfigure[\texttt{DT}, {$\alpha = c \cdot \delta, \beta = 0.5 \cdot \delta$}]{
\begin{minipage}[b]{0.203\textwidth}
\includegraphics[trim=0 0 0 0,clip,width=1\textwidth]{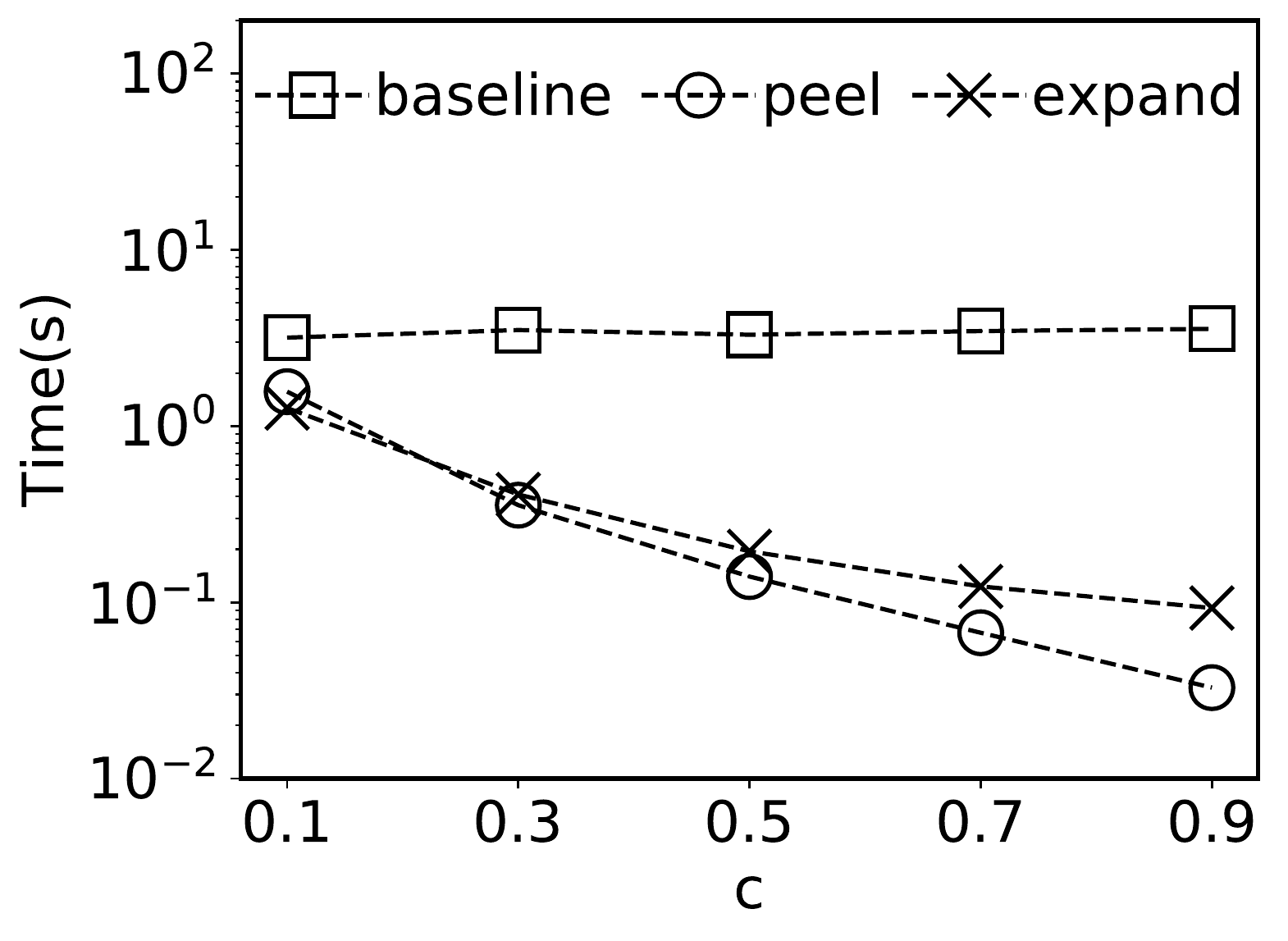}\vspace{-5.5mm}
\label{fig:p12}
\end{minipage}}
\subfigure[\texttt{ML}, {$\alpha = 0.5 \cdot \delta, \beta = c \cdot \delta$}]{
\begin{minipage}[b]{0.2\textwidth}
\includegraphics[trim=0 0 0 0,clip,width=1\textwidth]{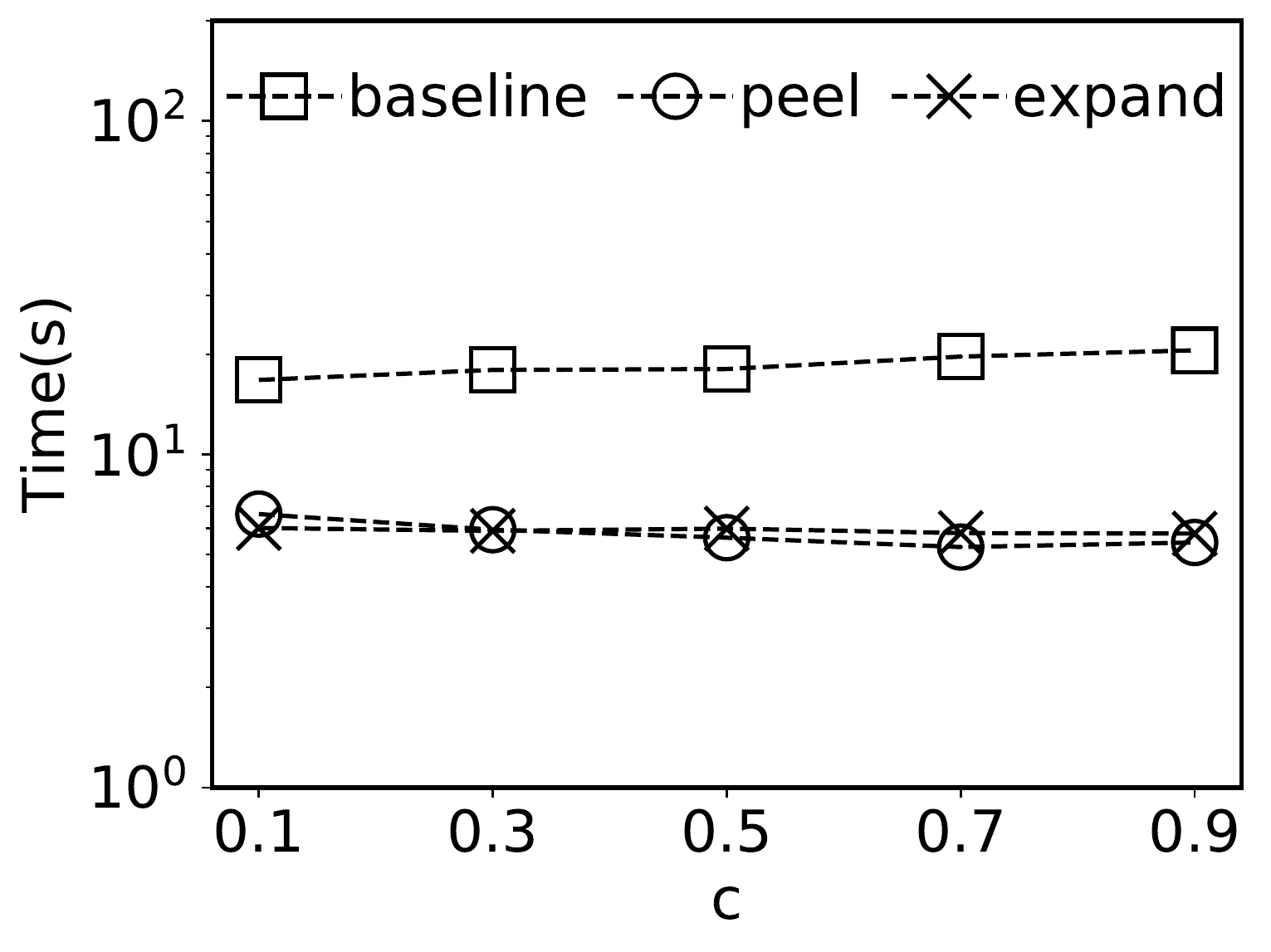}\vspace{-5.5mm}
\label{fig:p14}
\end{minipage}}
\vspace*{-2mm}\caption{{\color{black}Effect of $\alpha$ and $\beta$}}
\label{fig:ab}
\end{centering}
\end{figure}

{\color{black}
\noindent
{\bf Evaluating the effect of query parameters $\alpha$ and $\beta$.} In Figure \ref{fig:ab}, we vary $\alpha$ and $\beta$ on two datasets \texttt{DT} and \texttt{ML}. From Figure \ref{fig:ab}(a) and (b), we can see that, when $\alpha$ and $\beta$ are small, \expand is more efficient than \peel. In addition, the running time of \peel and \expand decreases as $\alpha$ (or $\beta$) increases. Note that the efficiency of these two algorithms largely depends on the size of the $(\alpha, \beta)$-community containing $q$ (i.e., $\size(\abcorecgq)$, which determines the search space) and the size of the final result (i.e., $\size(\wcomg)$, which relates to the actual computation cost). In most cases, when $\alpha$ and $\beta$ are large, the size of $\abcorecgq$ is small and $\wcomg$ is expected to be large since more edges are needed in $\wcomg$ to satisfy the cohesiveness constraints. Thus, the edges need to be peeled are usually few and \peel is more efficient than \expand. When $\alpha$ and $\beta$ are small, the search space (i.e., $\abcorecgq$) can be large and $\wcomg$ is expected to be small. Thus, \expand is usually more efficient than \peel in these cases. In most cases, we can determine to use \peel or \expand according to the choice of $\alpha$ and $\beta$.}

\begin{table}[ht]
\small
\caption{\label{table:distribution}
{\color{black}Running time under different weight distribution}}
\vspace{-3.5mm}
\begin{center}
\scalebox{1.0}{
\begin{tabular}{c||c|c|c|c}
\noalign{\hrule height 1pt}
\textbf{Algorithms} & AE & RW & UF & SK \\\hline
 \baseline & 0.03s & 3.12s & 4.42s & 4.31s \\ \hline
 \peel & 0.03s & 0.34s & 0.48s & 0.45s \\ \hline
 \expand & 0.03s & 0.31s & 0.41s  & 0.36s \\ \hline
\noalign{\hrule height 1pt}
\end{tabular}}
\end{center}
\end{table}

{\color{black}
\noindent
{\bf Evaluate the effect of weight distribution.}
In Table \ref{table:distribution}, we evaluate the effect of weight distribution on \texttt{DT} dataset. We test four weight distributions: (1) AE: the weights are all equal; (2) RW: the weights are generated using the random walk with restart model \cite{tong2006fast}; (3) UF: the weights follow uniform distribution; (4) SK: the weights follow skewed normal distribution with skewness = 1.02. When all the edge weights are equal (AE) which can be considered as a special case, all three algorithms can just return $\abcorecgq$ after efficiently scanning $\abcorecgq$. Note that the performances of these three algorithms are not very sensitive to the other three distributions. This is because both weight and structure cohesiveness are considered in our problem and the impact of RW/SK/UF weight distributions are limited.}



\section{Related Work}
\label{sct:related}
To the best of our knowledge, this paper is the first to study community search over bipartite graphs. Below we review two closely related areas, community search on unipartite graphs and cohesive subgraph models on bipartite graphs. 

\noindent
{\em Community search on unipartite graphs.} On unipartite graphs, community search is conducted based on different cohesiveness models such as $k$-core \cite{CXWW14SIGMOD, SG10KDD, barbieri2015efficient, fang2017effective, wang2018efficient, fang2016effective,ghafouri2020efficient, zhang2020exploring,liu2019corecube,fang2018effective,fang2018spatial,zhang2017finding,zhang2017olak}, $k$-truss \cite{akbas2017truss, huang2014querying, huang2015approximate, zheng2017finding,liu2021efficient,zhang2018finding}, clique \cite{yuan2017index,fang2019efficient,yuan2017index}. Interested readers can refer to \cite{CSSurvey2020} for a recent comprehensive survey.

{\color{black} Based on $k$-core, \cite{CXWW14SIGMOD} and \cite{SG10KDD} study online algorithms for $k$-core community search on unipartite graphs. In \cite{barbieri2015efficient}, Barbieri et al. propose a tree-like index structure for the $k$-core community search. Using $k$-core, Fang et al. \cite{fang2016effective} further integrate the attributes of vertices to identify community and the spatial locations of vertices are considered in \cite{fang2017effective, wang2018efficient}. For the truss-based community search, \cite{akbas2017truss, huang2014querying} study the triangle-connected model and \cite{huang2015approximate} studies the closest model. In \cite{yuan2017index}, the authors study the problem of densest clique percolation community search. However, the edge weights are not considered in any of the above works and their techniques cannot be easily extended to solve our problem. On edge-weighted unipartite graphs, the $k$-core model is applied to find cohesive subgraphs in \cite{garas2012k,eidsaa2013s}. They use a function to associate the edge weights with vertex degrees and the edge weights are not considered as a second factor apart from the graph structure. Thus, these works do not aim to find a cohesive subgraph with both structure cohesiveness and high weight (significance). Under their settings, a subgraph with loose structure can be found in the result. For example, a vertex can be included in the result if it is only incident with one large-weight edge. In \cite{zheng2017finding}, the $k$-truss model is adopted on edge-weighted graphs to find communities. However, the $k$-truss model is based on the triangle structure which does not exist on bipartite graphs. One may also consider using the graph projection technique \cite{newman2001scientific} to generate a unipartite projection from the original (weighted) bipartite graph. The drawback of this approach is twofold. Firstly, it can cause information loss and edge explosion \cite{sariyuce2018peeling}. Secondly, it is not easy to project a weighted bipartite graph and handle the projected graph using existing methods. This is because we need to consider two kinds of weights (i.e., the original edge weight and the structure weight generated from another layer) on the projected graph.} 

\noindent
{\em Finding cohesive subgraphs on bipartite graphs.} On bipartite graphs, several existing works \cite{ding2017efficient,liu2019,he2020exploring,liu2020efficient} extend the $k$-core model on unipartite graph to the $(\alpha, \beta)$-core model.
Based on the butterfly structure \cite{wang2019vertex}, \cite{zou2016bitruss, sariyuce2018peeling, wang2020efficient} study the bitruss model in bipartite graphs which is the maximal subgraph where each edge is contained in at least $k$ butterflies.
\cite{zhang2014finding} studies the biclique enumeration problem. {\color{black} However, the above works only consider the structure cohesiveness and ignore the edge weights which are important as validated in the experiments. In the literature, fair clustering problems \cite{chierichetti2017fair,ahmadi2020fair,ahmadian2020fair} are studied to find communities (i.e., clusters) under fairness constraints on bipartite graphs. The problem is inherently different and the techniques are not applicable to the problem studied in this paper. An interesting work in \cite{kobren2019paper} studies the paper matching problem in peer-review process which also finds dense subgraphs on bipartite graphs. However, their flow-based techniques are often used to solve a matching problem while our problem is not modeled as a matching problem.}

\vspace{-0.1cm}
\section{Conclusion}

\vspace{-0.1cm}
\label{sct:conclusion}
In this paper, we study the \wcom search problem. To solve this problem efficiently, we follow a two-step framework which first retrieves the \abcorecom, and then identifies the \wcom from the \abcorecom. We develop a novel index $\indexad$ to retrieve the \abcorecom in optimal time. In addition, we propose efficient peeling and expansion algorithms to obtain the \wcom. 
We conduct extensive experiments on real-world graphs, and the results demonstrate the effectiveness of the \wcom model and the proposed techniques. 

\vspace{-0.2cm}

\section{Acknowledgment}
\vspace{-0.1cm}

\label{sct:acknowledgment}
{\color{black}Xuemin Lin is supported by NSFC61232006, 2018YFB1003504, ARC DP200101338,  ARC DP180103096 and ARC DP170101628. Lu Qin is supported by ARC FT200100787. Wenjie Zhang is supported by ARC DP180103096 and ARC DP200101116. Ying Zhang is supported by FT170100128 and ARC DP180103096.} We would like to thank Yizhang He for his proofreading.


{
\bibliographystyle{IEEEtran}
\bibliography{paper}

\begin{thebibliography}{10}
\providecommand{\url}[1]{#1}
\csname url@samestyle\endcsname
\providecommand{\newblock}{\relax}
\providecommand{\bibinfo}[2]{#2}
\providecommand{\BIBentrySTDinterwordspacing}{\spaceskip=0pt\relax}
\providecommand{\BIBentryALTinterwordstretchfactor}{4}
\providecommand{\BIBentryALTinterwordspacing}{\spaceskip=\fontdimen2\font plus
\BIBentryALTinterwordstretchfactor\fontdimen3\font minus
  \fontdimen4\font\relax}
\providecommand{\BIBforeignlanguage}[2]{{%
\expandafter\ifx\csname l@#1\endcsname\relax
\typeout{** WARNING: IEEEtran.bst: No hyphenation pattern has been}%
\typeout{** loaded for the language `#1'. Using the pattern for}%
\typeout{** the default language instead.}%
\else
\language=\csname l@#1\endcsname
\fi
#2}}
\providecommand{\BIBdecl}{\relax}
\BIBdecl

\bibitem{wang2006unifying}
J.~Wang, A.~P. De~Vries, and M.~J. Reinders, ``Unifying user-based and
  item-based collaborative filtering approaches by similarity fusion,'' in
  \emph{SIGIR}.\hskip 1em plus 0.5em minus 0.4em\relax ACM, 2006, pp. 501--508.

\bibitem{beutel2013copycatch}
A.~Beutel, W.~Xu, V.~Guruswami, C.~Palow, and C.~Faloutsos, ``Copycatch:
  stopping group attacks by spotting lockstep behavior in social networks,'' in
  \emph{WWW}.\hskip 1em plus 0.5em minus 0.4em\relax ACM, 2013, pp. 119--130.

\bibitem{konect:DBLP}
M.~Ley, ``The {DBLP} computer science bibliography: Evolution, research issues,
  perspectives,'' in \emph{Proc. Int. Symposium on String Processing and
  Information Retrieval}, 2002, pp. 1--10.

\bibitem{CXWW14SIGMOD}
W.~Cui, Y.~Xiao, H.~Wang, and W.~Wang, ``Local serach of communities in large
  graphs,'' in \emph{SIGMOD}, 2014, pp. 991--1002.

\bibitem{SG10KDD}
M.~Sozio and A.~Gionis, ``The community-search problem and how to plan a
  succesful cocktail party,'' in \emph{SIGKDD}, 2010, pp. 939--948.

\bibitem{barbieri2015efficient}
N.~Barbieri, F.~Bonchi, E.~Galimberti, and F.~Gullo, ``Efficient and effective
  community search,'' \emph{Data mining and knowledge discovery}, vol.~29,
  no.~5, pp. 1406--1433, 2015.

\bibitem{fang2016effective}
Y.~Fang, R.~Cheng, S.~Luo, and J.~Hu, ``Effective community search for large
  attributed graphs,'' \emph{PVLDB}, vol.~9, no.~12, pp. 1233--1244, 2016.

\bibitem{huang2014querying}
X.~Huang, H.~Cheng, L.~Qin, W.~Tian, and J.~X. Yu, ``Querying k-truss community
  in large and dynamic graphs,'' in \emph{SIGMOD}, 2014, pp. 1311--1322.

\bibitem{huang2015approximate}
X.~Huang, L.~V.~S. Lakshmanan, J.~X. Yu, and H.~Cheng, ``Approximate closest
  community search in networks,'' \emph{PVLDB}, vol.~9, no.~4, pp. 276--287,
  2015.

\bibitem{huang2017attribute}
X.~Huang and L.~V. Lakshmanan, ``Attribute-driven community search,''
  \emph{PVLDB}, vol.~10, no.~9, pp. 949--960, 2017.

\bibitem{CSSurvey2020}
Y.~Fang, X.~Huang, L.~Qin, Y.~Zhang, R.~Cheng, and X.~Lin, ``A survey of
  community search over big graphs,'' \emph{VLDB J.}, vol.~29, no.~1, pp.
  353--392, 2020.

\bibitem{chierichetti2017fair}
F.~Chierichetti, R.~Kumar, S.~Lattanzi, and S.~Vassilvitskii, ``Fair clustering
  through fairlets,'' in \emph{Advances in Neural Information Processing
  Systems}, 2017, pp. 5029--5037.

\bibitem{ahmadi2020fair}
S.~Ahmadi, S.~Galhotra, B.~Saha, and R.~Schwartz, ``Fair correlation
  clustering,'' \emph{arXiv preprint arXiv:2002.03508}, 2020.

\bibitem{ahmadian2020fair}
S.~Ahmadian, A.~Epasto, M.~Knittel, R.~Kumar, M.~Mahdian, B.~Moseley, P.~Pham,
  S.~Vassilvtiskii, and Y.~Wang, ``Fair hierarchical clustering,'' \emph{arXiv
  preprint arXiv:2006.10221}, 2020.

\bibitem{liu2019}
B.~Liu, L.~Yuan, X.~Lin, L.~Qin, W.~Zhang, and J.~Zhou, ``Efficient ($\alpha$,
  $\beta$)-core computation: An index-based approach,'' in \emph{WWW}.\hskip
  1em plus 0.5em minus 0.4em\relax ACM, 2019, pp. 1130--1141.

\bibitem{ding2017efficient}
D.~Ding, H.~Li, Z.~Huang, and N.~Mamoulis, ``Efficient fault-tolerant group
  recommendation using alpha-beta-core,'' in \emph{CIKM}, 2017, pp. 2047--2050.

\bibitem{wang2020efficient}
K.~Wang, X.~Lin, L.~Qin, W.~Zhang, and Y.~Zhang, ``Efficient bitruss
  decomposition for large-scale bipartite graphs,'' in \emph{ICDE}.\hskip 1em
  plus 0.5em minus 0.4em\relax IEEE, 2020, pp. 661--672.

\bibitem{zou2016bitruss}
Z.~Zou, ``Bitruss decomposition of bipartite graphs,'' in \emph{DASFAA}.\hskip
  1em plus 0.5em minus 0.4em\relax Springer, 2016, pp. 218--233.

\bibitem{sariyuce2018peeling}
A.~E. Sar{\i}y{\"u}ce and A.~Pinar, ``Peeling bipartite networks for dense
  subgraph discovery,'' in \emph{WSDM}.\hskip 1em plus 0.5em minus 0.4em\relax
  ACM, 2018, pp. 504--512.

\bibitem{zhang2014finding}
Y.~Zhang, C.~A. Phillips, G.~L. Rogers, E.~J. Baker, E.~J. Chesler, and M.~A.
  Langston, ``On finding bicliques in bipartite graphs: a novel algorithm and
  its application to the integration of diverse biological data types,''
  \emph{BMC bioinformatics}, vol.~15, no.~1, p. 110, 2014.

\bibitem{khaouid2015k}
W.~Khaouid, M.~Barsky, V.~Srinivasan, and A.~Thomo, ``K-core decomposition of
  large networks on a single pc,'' \emph{PVLDB}, vol.~9, no.~1, pp. 13--23,
  2015.

\bibitem{cormen2009introduction}
T.~H. Cormen, C.~E. Leiserson, R.~L. Rivest, and C.~Stein, \emph{Introduction
  to algorithms}.\hskip 1em plus 0.5em minus 0.4em\relax MIT press, 2009.

\bibitem{tong2006fast}
H.~Tong, C.~Faloutsos, and J.-Y. Pan, ``Fast random walk with restart and its
  applications,'' in \emph{ICDM}.\hskip 1em plus 0.5em minus 0.4em\relax IEEE,
  2006, pp. 613--622.

\bibitem{chen2020structsim}
X.~Chen, L.~Lai, L.~Qin, and X.~Lin, ``Structsim: Querying structural node
  similarity at billion scale,'' in \emph{ICDE}.\hskip 1em plus 0.5em minus
  0.4em\relax IEEE, 2020, pp. 1950--1953.

\bibitem{jeh2002simrank}
G.~Jeh and J.~Widom, ``Simrank: a measure of structural-context similarity,''
  in \emph{SIGKDD}, 2002, pp. 538--543.

\bibitem{kannan1999analyzing}
R.~Kannan and V.~Vinay, \emph{Analyzing the structure of large graphs}.\hskip
  1em plus 0.5em minus 0.4em\relax Rheinische
  Friedrich-Wilhelms-Universit{\"a}t Bonn Bonn, 1999.

\bibitem{fang2017effective}
Y.~Fang, R.~Cheng, X.~Li, S.~Luo, and J.~Hu, ``Effective community search over
  large spatial graphs,'' \emph{PVLDB}, vol.~10, no.~6, pp. 709--720, 2017.

\bibitem{wang2018efficient}
K.~Wang, X.~Cao, X.~Lin, W.~Zhang, and L.~Qin, ``Efficient computing of
  radius-bounded k-cores,'' in \emph{ICDE}.\hskip 1em plus 0.5em minus
  0.4em\relax IEEE, 2018, pp. 233--244.

\bibitem{ghafouri2020efficient}
M.~Ghafouri, K.~Wang, F.~Zhang, Y.~Zhang, and X.~Lin, ``Efficient graph
  hierarchical decomposition with user engagement and tie strength,'' in
  \emph{DASFAA}.\hskip 1em plus 0.5em minus 0.4em\relax Springer, 2020, pp.
  448--465.

\bibitem{zhang2020exploring}
C.~Zhang, F.~Zhang, W.~Zhang, B.~Liu, Y.~Zhang, L.~Qin, and X.~Lin, ``Exploring
  finer granularity within the cores: Efficient (k, p)-core computation,'' in
  \emph{ICDE}.\hskip 1em plus 0.5em minus 0.4em\relax IEEE, 2020, pp. 181--192.

\bibitem{liu2019corecube}
B.~Liu, F.~Zhang, C.~Zhang, W.~Zhang, and X.~Lin, ``Corecube: Core
  decomposition in multilayer graphs,'' in \emph{WISE}.\hskip 1em plus 0.5em
  minus 0.4em\relax Springer, 2019, pp. 694--710.

\bibitem{fang2018effective}
Y.~Fang, Z.~Wang, R.~Cheng, H.~Wang, and J.~Hu, ``Effective and efficient
  community search over large directed graphs,'' \emph{TKDE}, vol.~31, no.~11,
  pp. 2093--2107, 2018.

\bibitem{fang2018spatial}
Y.~Fang, Z.~Wang, R.~Cheng, X.~Li, S.~Luo, J.~Hu, and X.~Chen, ``On
  spatial-aware community search,'' \emph{IEEE Transactions on Knowledge and
  Data Engineering}, vol.~31, no.~4, pp. 783--798, 2018.

\bibitem{zhang2017finding}
F.~Zhang, Y.~Zhang, L.~Qin, W.~Zhang, and X.~Lin, ``Finding critical users for
  social network engagement: The collapsed k-core problem,'' in \emph{AAAI},
  2017, pp. 245--251.

\bibitem{zhang2017olak}
F.~Zhang, W.~Zhang, Y.~Zhang, L.~Qin, and X.~Lin, ``Olak: an efficient
  algorithm to prevent unraveling in social networks,'' \emph{Proceedings of
  the VLDB Endowment}, vol.~10, no.~6, pp. 649--660, 2017.

\bibitem{akbas2017truss}
E.~Akbas and P.~Zhao, ``Truss-based community search: a truss-equivalence based
  indexing approach,'' \emph{PVLDB}, vol.~10, no.~11, pp. 1298--1309, 2017.

\bibitem{zheng2017finding}
Z.~Zheng, F.~Ye, R.-H. Li, G.~Ling, and T.~Jin, ``Finding weighted k-truss
  communities in large networks,'' \emph{Information Sciences}, vol. 417, pp.
  344--360, 2017.

\bibitem{liu2021efficient}
B.~Liu, F.~Zhang, W.~Zhang, X.~Lin, and Y.~Zhang, ``Efficient community search
  with size constraint,'' in \emph{{ICDE}}.\hskip 1em plus 0.5em minus
  0.4em\relax {IEEE}, 2021.

\bibitem{zhang2018finding}
F.~Zhang, C.~Li, Y.~Zhang, L.~Qin, and W.~Zhang, ``Finding critical users in
  social communities: The collapsed core and truss problems,'' \emph{TKDE},
  2018.

\bibitem{yuan2017index}
L.~Yuan, L.~Qin, W.~Zhang, L.~Chang, and J.~Yang, ``Index-based densest clique
  percolation community search in networks,'' \emph{TKDE}, vol.~30, no.~5, pp.
  922--935, 2017.

\bibitem{fang2019efficient}
Y.~Fang, K.~Yu, R.~Cheng, L.~V.~S. Lakshmanan, and X.~Lin, ``Efficient
  algorithms for densest subgraph discovery,'' \emph{PVLDB}, vol.~12, no.~11,
  pp. 1719--1732, Jul. 2019.

\bibitem{garas2012k}
A.~Garas, F.~Schweitzer, and S.~Havlin, ``A k-shell decomposition method for
  weighted networks,'' \emph{New Journal of Physics}, vol.~14, no.~8, p.
  083030, 2012.

\bibitem{eidsaa2013s}
M.~Eidsaa and E.~Almaas, ``S-core network decomposition: A generalization of
  k-core analysis to weighted networks,'' \emph{Physical Review E}, vol.~88,
  no.~6, 2013.

\bibitem{newman2001scientific}
M.~E. Newman, ``Scientific collaboration networks. i. network construction and
  fundamental results,'' \emph{Physical review E}, vol.~64, no.~1, p. 016131,
  2001.

\bibitem{he2020exploring}
Y.~He, K.~Wang, W.~Zhang, X.~Lin, and Y.~Zhang, ``Exploring cohesive subgraphs
  with vertex engagement and tie strength in bipartite graphs,'' \emph{arXiv
  preprint arXiv:2008.04054}, 2020.

\bibitem{liu2020efficient}
B.~Liu, L.~Yuan, X.~Lin, L.~Qin, W.~Zhang, and J.~Zhou, ``Efficient
  ({\(\alpha\)}, {\(\beta\)})-core computation in bipartite graphs,''
  \emph{VLDB J.}, pp. 1--25, 2020.

\bibitem{wang2019vertex}
K.~Wang, X.~Lin, L.~Qin, W.~Zhang, and Y.~Zhang, ``Vertex priority based
  butterfly counting for large-scale bipartite networks,'' \emph{PVLDB},
  vol.~12, no.~10, pp. 1139--1152, 2019.

\bibitem{kobren2019paper}
A.~Kobren, B.~Saha, and A.~McCallum, ``Paper matching with local fairness
  constraints,'' in \emph{SIGKDD}, 2019, pp. 1247--1257.

\end{thebibliography}
}
\end{document}